\documentclass[12pt]{article}
\usepackage{amsmath}      
\allowdisplaybreaks[3]  
\usepackage{graphicx} 
\usepackage{amsthm}
\usepackage{mathrsfs}
\usepackage{url}
\usepackage{bbm}
\usepackage[margin=20mm]{geometry}
\usepackage{graphicx}
\title{Non-relativistic limit of generalized relativistic Pauli operators by Feynman-Kac formulae}
\author{
Soichiro Sakamoto\\
Faculty of Mathematics\\
Kyushu University
}
\date{May 2026}

\theoremstyle{definition}
\newtheorem{definition}{Definition}[section]
\newtheorem{proposition}[definition]{Proposition}
\newtheorem{lemma}[definition]{Lemma}
\newtheorem{theorem}[definition]{Theorem}
\newtheorem{corollary}[definition]{Corollary}
\newtheorem{remark}[definition]{Remark}

\theoremstyle{plain}
\numberwithin{equation}{section}
\theoremstyle{remark}
\usepackage{amsmath, amssymb}
\begin{document}
\newcommand{\norm}[1]{\left\lVert #1 \right\rVert}
\maketitle

\begin{abstract}
The non-relativistic limit of a generalized relativistic Pauli operator
\[
H_c^{S,\alpha}
=
\left(
2c^{\beta}\bigl(\sigma\cdot(-i\nabla-a)\bigr)^2
+(mc^\gamma)^{2/\alpha}
\right)^{\alpha/2}
-mc^\gamma+V
\]
on $L^2(\mathbb{R}^3;\mathbb{C}^2)$ is investigated under the constraint
$2\alpha=\gamma\beta+\gamma^2$.
This operator generalizes the relativistic Pauli operator within the framework of Bernstein functions.
The associated heat semigroup $e^{-tH_c^{S,\alpha}}$ admits a Feynman--Kac representation involving Brownian motion, a subordinator, and a Poisson process.
Using this representation, we prove that the semigroup $e^{-tH_c^{S,\alpha}}$ converges strongly to $e^{-tH^{S,\alpha}}$ as $c\to\infty$, where the limiting generator is given by
\[
H^{S,\alpha}
=
\frac{\alpha}{2m^{\frac{2}{\alpha}-1}}
\bigl(\sigma\cdot(-i\nabla-a)\bigr)^2
+V.
\]
The non-relativistic limit of a generalized relativistic Schr\"odinger operator is also investigated.\end{abstract}

\section{Introduction}
In this paper, we study the relativistic Schr\"odinger operator with a vector potential $a$,
\[
H_c = \sqrt{ c^2 (-i\nabla - a)^2 + m^2 c^4 } - mc^2 + V
\]
acting on $L^2(\mathbb{R}^d)$, and the relativistic Pauli operator
\[
H_c^{\mathrm{S}} = \sqrt{ c^2 \bigl( \sigma \cdot (-i\nabla - a) \bigr)^2 + m^2 c^4 } - mc^2 + V
\]
acting on $L^2(\mathbb{R}^d;\mathbb{C}^2)$. Here $\sigma = (\sigma_1,\sigma_2,\sigma_3)$ denotes the Pauli matrices. The operator $H_c^{\mathrm{S}}$ incorporates the spin-$\tfrac12$ degree of freedom, whereas $H_c$ does not. Both $H_c$ and $H_c^{\mathrm{S}}$ describe relativistic particles minimally coupled to the vector potential $a$. The function $V$ represents an external potential, $c>0$ is the speed of light, and $m>0$ is the particle mass.

\;The purpose of this paper is to investigate the non-relativistic limit of the semi-groups $e^{-tH_c^\#}$ as $c\to\infty$, where the symbol $\#$ stands for either the scalar or spin case. Indeed, since
\[
\sqrt{c^2 u + m^2 c^4} - mc^2
= \frac{1}{2m} u + O\!\left(\frac{u^2}{m^3 c^2}\right),
\]
we formally expect that
\[e^{-tH_c^\#} \;\longrightarrow\; e^{-tH^\#}\]
in the strong sense as $c\to\infty$. Here
\begin{align*}
&H = \frac{1}{2m}(-i\nabla - a)^2 + V,\\
&H^{\mathrm{S}} = \frac{1}{2m}\bigl(\sigma \cdot (-i\nabla - a)\bigr)^2 + V,
\end{align*}
are the non-relativistic Schr\"odinger and Pauli operators, respectively.
In this paper we further introduce a class of generalized operators. Consider the function
\[
\Psi_c(u) = (2c^2 u + m^2 c^4)^{1/2} - mc^2,
\]
which is a Bernstein function, i.e, $(-1)^{n+1}\Psi_c^{(n)}(u)\ge0$ for all $u\ge0$. Also, $\Psi_c(0)=0$ is satisfied. Then the operators $H_c$ and $H_c^{\mathrm{S}}$ can be expressed as
\begin{align*}
&H_c = \Psi_c\!\left(\tfrac12 (-i\nabla - a)^2\right) + V,\\
&H_c^{\mathrm{S}} = \Psi_c\!\left(\tfrac12 (\sigma \cdot (-i\nabla - a))^2\right) + V.
\end{align*}
We generalize this construction by replacing $\Psi_c$ with a family of Bernstein functions
\[
\Psi_{\alpha,\beta,\gamma,c}(u)
= \bigl(2c^\beta u + (mc^\gamma)^{2/\alpha}\bigr)^{\alpha/2} - mc^\gamma,
\qquad 0<\alpha<2.
\]
Under the condition $2\alpha = \beta\gamma + \gamma^2$, one verifies that
\[
\Psi_{\alpha,\beta,\gamma,c}(u)
\longrightarrow
\frac{\alpha}{2m^{\frac{2}{\alpha}-1}}\,u
\]
as $c\to\infty$. The original function $\Psi_c$ corresponds to the special case $\alpha=1$ and $\beta=\gamma=2$.
Accordingly, we introduce the generalized operators:
\begin{align*}
&H_c^\alpha = \Psi_{\alpha,\beta,\gamma,c}\!\left(\tfrac12 (-i\nabla - a)^2\right) + V,\\
&H_c^{{\mathrm{S}},\alpha}
= \Psi_{\alpha,\beta,\gamma,c}\!\left(\tfrac12 (\sigma \cdot (-i\nabla - a))^2\right) + V.
\end{align*}
The main purpose of this paper is to prove that
\[
e^{-tH_c^{\#,\alpha}} \;\longrightarrow\; e^{-tH^{\#,\alpha}}
\]
as $c\to\infty$, where
\begin{align*}
&H^{\alpha} = \frac{\alpha}{2m^{\frac{2}{\alpha}-1}}(-i\nabla - a)^2 + V,\\
&H^{\mathrm{S},\alpha} = \frac{\alpha}{2m^{\frac{2}{\alpha}-1}}\bigl(\sigma \cdot (-i\nabla - a)\bigr)^2 + V.
\end{align*}

Our approach is based on the Feynman--Kac formula (FKF) for semi-groups $e^{-tH_c^{\#,\alpha}}$. The FKF for $e^{-tH_c}$ and $e^{-tH_c^{\mathrm{S}}}$ were derived in \cite{JHV2020}. The corresponding formulae for $e^{-tH_c^\alpha}$ and $e^{-tH_c^{{\mathrm{S}},\alpha}}$ can also be established by a slight modification of these arguments. In particular, the FKF for $e^{-tH_c^{{\mathrm{S}},\alpha}}$ involves three independent stochastic processes: a Brownian motion, a subordinator associated with the Bernstein function $\Psi_{\alpha,\beta,\gamma,c}$, and a spin process driven by a Poisson process.

Finally, we briefly review related results. 
A path integral representation of the semi-group 
$e^{-tH}$ is known as the Feynman--Kac--It\^o formula \cite{Sim1979}. 
Ichinose and Tamura extended this framework to the relativistic Hamiltonian $H_c$ in \cite{IT1986}, 
although their representation differs from the one employed in the present work. 
Based on these developments, Ichinose proved the non-relativistic limit in \cite{Ichi1987} in the spinless case. 
The path integral representations of the heat semi-groups for the Pauli operator and the relativistic Pauli operator are given by \cite{GGM1983,BJ1981} and \cite{GAM1991}, respectively, and it is generalized in \cite{HIL2009}.
The non-relativistic limit of the Dirac operator has also been investigated from an operator-theoretic perspective in \cite{Hun1975,Tha1992}.

\section*{Acknowledgments}
The author would like to express his deepest gratitude to Professor Fumio Hiroshima for his invaluable comments on the structure of this paper and for meticulously pointing out several mathematical errors in an earlier draft. His suggestions were instrumental in improving both the logical clarity and the mathematical rigor of this work.

\section{$\alpha/2$-relativistic subordinator}
\subsection{Bernstein functions and subordinators}
We introduce an $\alpha/2$-relativistic subordinator, which plays an important role in the subsequent stochastic arguments. Bernstein functions introduced below are treated in \cite{App2009, JHV2020}.
\begin{definition}\textbf{(Bernstein function)} Let \[\mathscr{B}=\biggl\{f\in C^\infty((0,\infty))\bigg|f(x)\ge0\;\text{and}\;(-1)^n\frac{d^n}{dx^n}f(x)\le0 \; \text{for}\;\text{all} \; n\in\mathbb{N}\biggr\}.\]\\
An element of $\mathscr{B}$ is called a Bernstein function. Furthermore, we define a subset of $\mathscr{B}$: 
\[\mathscr{B}_0=\{f\in\mathscr{B}\;|\lim_{x \to 0+}f(x)=0\}.\]
\end{definition}
We introduce a set of measures. Let $\mathscr{L}$ be the set of Borel measures $\lambda$ on $\mathbb{R}\backslash\{0\}$ such that
$\lambda((-\infty,0))=0$ and $\int_{\mathbb{R}\backslash\{0\}}(y\wedge1)\lambda(dy)<\infty.$
In particular, $\lambda\in\mathscr{L}$ is a Lévy measure, i.e, $\int_{\mathbb{R}\backslash\{0\}}(y\wedge1)^2\lambda(dy)<\infty.$
\begin{proposition} Let $\Psi\in\mathscr{B}_0$. Then there exists a unique $(b,\lambda)\in[0,\infty) \times\mathscr{L}$ such that 

\begin{equation}
\Psi(u)=bu\;+\;\displaystyle\int_0^\infty (1-e^{-uy})\lambda(dy).
\end{equation}
Conversely, the right-hand side of (2.1) is in $\mathscr{B}_0$ for each
$(b,\lambda)\in\mathbb{R}_+\times\mathscr{L}$.

\end{proposition}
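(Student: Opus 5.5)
The plan is to prove the L\'evy--Khintchine representation of $\Psi\in\mathscr{B}_0$ through Bernstein's theorem on completely monotone functions. Since $\Psi\in\mathscr{B}$, its derivative $\Psi'$ is nonnegative and satisfies $(-1)^n(\Psi')^{(n)}\ge 0$ for all $n\ge0$. In other words, $\Psi'$ is completely monotone on $(0,\infty)$. Bernstein's theorem (Hausdorff--Bernstein--Widder) then gives a unique Borel measure $\mu$ on $[0,\infty)$ with
\[
\Psi'(u)=\int_{[0,\infty)}e^{-uy}\,\mu(dy),\qquad u>0.
\]
Set $b=\mu(\{0\})\ge0$ and $\nu=\mu|_{(0,\infty)}$.

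Next I integrate from $0$ to $u$, using $\lim_{x\to0+}\Psi(x)=0$ and Tonelli's theorem:
\[
\Psi(u)=bu+\int_{(0,\infty)}\frac{1-e^{-uy}}{y}\,\nu(dy).
\]
Defining $\lambda(dy)=y^{-1}\nu(dy)$ on $(0,\infty)$, extended by zero to $(-\infty,0)$, gives formula (2.1).

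It remains to check that $\lambda\in\mathscr{L}$. The inequality $1-e^{-y}\ge(1-e^{-1})(y\wedge1)$ for $y>0$ gives
\[
\int(y\wedge1)\,\lambda(dy)\le(1-e^{-1})^{-1}\Psi(1)<\infty.
\]

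For uniqueness, suppose $(b,\lambda)$ is any representation. Differentiating under the integral sign, which is justified by dominated convergence for $u\ge u_0>0$ because $ye^{-uy}\le C(y\wedge1)$, yields
\[
\Psi'(u)=b+\int ye^{-uy}\,\lambda(dy).
\]
This is the Laplace transform of the measure $b\delta_0+y\lambda(dy)$. Uniqueness of the Laplace transform of measures on $[0,\infty)$ (again part of Bernstein's theorem) determines $b$ and $y\lambda(dy)$, and hence $\lambda$ on $(0,\infty)$.

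For the converse, let $(b,\lambda)\in\mathbb{R}_+\times\mathscr{L}$ and define $\Psi$ by the right-hand side of (2.1). Then $\Psi\ge0$. Differentiation under the integral, dominated by $y^ne^{-u_0y}\le C_n(y\wedge1)$ for $u\ge u_0>0$, gives $\Psi\in C^\infty((0,\infty))$ with
\[
(-1)^{n}\Psi^{(n)}(u)=-\int y^n e^{-uy}\,\lambda(dy)\le0 \quad\text{for } n\ge2,
\]
and $\Psi'\ge0$. Finally, $\Psi(u)\to0$ as $u\to0+$ by dominated convergence, since $1-e^{-uy}\le y\wedge1$ for $u\le1$.

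The main obstacle is the existence step, which rests on Bernstein's theorem for completely monotone functions. I would invoke it from the literature (\cite{App2009, JHV2020}) rather than reprove it. The only delicate points are handling the atom at $0$, which produces $b$, and justifying the interchange of integrals.
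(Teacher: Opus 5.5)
Your proposal is correct. It recognizes $\Psi'$ as completely monotone, applies Bernstein's theorem, takes $b=\mu(\{0\})$ and $\lambda(dy)=y^{-1}\mu|_{(0,\infty)}(dy)$, and then checks $\int (y\wedge 1)\,\lambda(dy)\le (1-e^{-1})^{-1}\Psi(1)$, uniqueness via the Laplace transform of $b\delta_0+y\lambda(dy)$, and the converse by dominated differentiation; all of these steps are sound.

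The paper gives no proof of this proposition and treats it as a known fact from \cite{App2009, JHV2020}. Your argument is the standard Bernstein-theorem proof of that fact.
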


There exists a fundamental relationship between subordinators and $\mathscr{B}_0$. 
This relationship is indispensable for establishing the path integral representation of generalized relativistic Pauli operators. Let $(X_t)_{t\ge0}$ be a random process on a probability space $(\Omega,\mathcal{F},P)$. From now on, we write 
\[\mathbb{E}_P^{x}[f(X_t)]=\int_{\Omega}f(X_t(\omega)+x)dP(\omega).\]
\begin{proposition}
Let $\Psi\in\mathscr{B}_0$, and $(b,\lambda)\in\mathbb{R}_+\times\mathscr{L}$ be 
a pair satisfying (2.1). Then there exists a unique subordinator $(T_t)_{t\ge0}$ on a probability space $(\Omega,\mathcal{F},P)$ such that 
\begin{equation}
\mathbb{E}_P^{0}[e^{-uT_t}]=e^{-t\Psi(u)}.
\end{equation}
Conversely, let $(T_t)_{t\ge0}$ be a subordinator on $(\Omega,\mathcal{F},P)$. Then there exists a unique $\Psi\in\mathscr{B}_0$ satisfying (2.2). 
\begin{proof}
See \cite[Proposition 3.99]{JHV2020}.    
\end{proof}
\end{proposition} 

\subsection{$\alpha/2$-relativistic subordinator}
Let $0<\alpha<2$, and choose $\beta,\gamma>0$ such that $2\alpha=\beta\gamma+{\gamma}^2$.
Let  
\begin{equation}
\Psi_c^{\alpha}(u)=(2c^{\beta}u+(mc^\gamma)^{2/\alpha})^{\alpha/2}-mc^\gamma, \quad u\ge0.    
\end{equation}
We can directly see that $\Psi_c^{\alpha}\in\mathscr{B}_0$ for each $c>0$. Let $(\Omega,\mathcal{F},P)$ be a probability space. We call $(T_t^c)_{t\ge0}$ satisfying (2.4) below an $\alpha/2$-relativistic subordinator parametrized by $c>0$:
\begin{equation}
\mathbb{E}_P^{0}[e^{-uT_t^c}]=e^{-t\Psi_c^{\alpha}(u)}.    
\end{equation}

The right-hand side of (2.4) can be represented by 
\begin{equation}
\exp(-t\Psi_c^{\alpha}(u))={\exp}\Biggl({\frac{{\alpha}(2c^{\beta})^{\frac{\alpha}{2}}t}{2\Gamma(1-{\alpha/2)}}\displaystyle\int_0^\infty(1-e^{-uy})e^{\frac{-(mc^\gamma)^{2/\alpha}}{2c^{\beta}}}\frac{dy}{y^{1+\alpha/2}}\Biggr)}.
\end{equation}
See \cite[Example 5.9]{Bog2009}. Since $T_t^c$ is a  Lévy process, we can see that Lévy triplet of $(T_t^c)_{t\ge0}$ is given by the form $(b,0,\mu)$.
Here $b{\in}\mathbb{R}$ and $\mu$ is a Lévy measure. By Lévy–Khintchine formula, we have 
\begin{equation}
\mathbb{E}_P^{0}[e^{iuT_t^c}]=\exp\Biggl(itbu+t\displaystyle\int_{\mathbb{R}\backslash\{0\}}(e^{iuy}-1-iuy\mathbbm{1}_{\{|y|\le1\}})\mu(dy)\Biggr).
\end{equation}
The function $\mathbb{R}{\ni}u\mapsto\mathbb{E}_P^{0}[e^{iuT_t^c}]$ can be analytically continued into the region $\{iu\in\mathbb{C}\;|\;u>0\}$. Hence we can obtain the Lévy triplet of $(T_t^c)_{t\ge0}$ by the analytical continuation of both sides of (2.6) and comparing the resulting expression with (2.5):
\begin{equation}
(b,0,\mu)=\left(\displaystyle\int_0^1y\nu_c(dy),0,\nu_c\right),
\end{equation}

where 
\[\nu_c(dy) = (2c^{\beta})^{\frac{\alpha}{2}}\frac{\alpha}{2\Gamma\left(1-\frac{\alpha}{2}\right)} \exp\left( -\frac{(mc^\gamma)^{2/\alpha}}{2c^\beta}y\right) \frac{1}{y^{1+\alpha/2}}dy.\] 
By (2.7) and Lévy-Itô decomposition, we can obtain a stochastic integral representation of $T_t^c$ by 
\begin{equation}
T_t^c=t\displaystyle\int_0^1z\nu_c(dz)+\displaystyle\int_1^{\infty}zN_c(t,dz)+\displaystyle\int_0^1z\tilde{N_c}(t,dz),
\end{equation}
where $N_c(t,dz)$ denote a Poisson random measure and $\tilde{N_c}(t,dz)$ a compensated Poisson random measure associated with $(T_t^c)_{t\ge0}$. Note that $(N_c(t,\cdot))_{t\ge0}$ is the Poisson process with intensity $\nu_c(A)=\mathbb{E}_P^{0}[N_c(1,A)]$ for
each $A\subset[0,\infty)$.
Furthermore, $\tilde{N_c}(t,A)=N_c(t,dz)-t\nu_c(A)$, and $(\tilde{N}_c(t,A))_{t\ge0}$ is a martingale for
each $A\subset[0,\infty)$. We investigate the exponent of $(T_t^c)_{t\ge0}$ and limiting behavior of its expectation as $c \to \infty$. This computation is used to justify the integrability of the Feynman–Kac formula of the relativistic Pauli operator, and to investigate the non-relativistic limit.

\begin{lemma}
Fix $c>0$. For $0<u<{\frac{(mc^{\gamma})^{\frac{2}{\alpha}}}{2c^\beta}}$ we have 
\begin{equation}
\mathbb{E}_P^0[e^{uT_t^c}]=\exp\Bigl(-t\bigl(\bigl(-2c^{\beta}u+(mc^{\gamma})^{\frac{2}{\alpha}}\bigr)^{\frac{\alpha}{2}}-mc^{\gamma}\bigr)\Bigr).
\end{equation}
In particular, for $u>0$, it follows that 
\begin{equation}
\lim_{c \to \infty}\mathbb{E}_P^0[e^{uT_t^c}]=\exp\left({t\frac{{\alpha}u}{m^{\frac{2}{\alpha}-1}}}\right)
\end{equation}
and $\displaystyle\sup_{c>0}\mathbb{E}_P^0[e^{uT_t^c}]<{\infty}$ for all $u\in\mathbb{R}$.
\begin{proof}
By the definition of $\nu_c$ we see that 
$\int_0^{\infty}(e^{uz}-1)\nu_c(dz)<\infty$ 
for $0<u<{\frac{(mc^{\gamma})^{\frac{2}{\alpha}}}{2c^\beta}}$, 
and 
\[
\displaystyle\int_0^{\infty}(e^{uz}-1)\nu_c(dz)=\bigl(-2c^{\beta}u+(mc^{\gamma})^{\frac{2}{\alpha}}\bigr)^{\frac{\alpha}{2}}-mc^{\gamma}
\]
follows. 
We write
\[
{Z_t^c}=\displaystyle\int_0^{t+}\displaystyle\int_1^{\infty}zN_c(ds,dz)+\displaystyle\int_0^{t+}\displaystyle\int_0^1z\tilde{N_c}(ds,dz),
\]
where $\int_0^tN_c(ds,dz)=N_c(t,dz)$. By Itô-formula for semimartingales, we have
\begin{align*}
e^{u{Z_t^c}}-1=&\displaystyle\int_0^{t+}\displaystyle\int_1^{\infty}e^{uZ_s^c}(e^{uz}-1)N_c(ds,dz)+\displaystyle\int_0^{t+}\displaystyle\int_0^1e^{uZ_s^c}(e^{uz}-1)\tilde{N_c}(ds,dz)\\
&+\displaystyle\int_0^t\displaystyle\int_0^1e^{uZ_s^c}(e^{uz}-uz-1)ds\nu_c(dz).
\end{align*}
Taking expectations of both sides above, we obtain that
\begin{align*}
\mathbb{E}_P^0[e^{u{Z_t^c}}]
&=
\;1+\mathbb{E}_P^0\left[\displaystyle\int_0^{t+}\displaystyle\int_1^{\infty}e^{uZ_s^c}(e^{uz}-1)N_c(ds,dz)\right]+\mathbb{E}_P^0\left[\displaystyle\int_0^t\displaystyle\int_0^1e^{uZ_s^c}(e^{uz}-uz-1)ds\nu_c(dz)\right]\\
&= 1+\mathbb{E}_P^0\left[\displaystyle\int_0^t\displaystyle\int_1^{\infty}e^{uZ_s^c}(e^{uz}-1)ds\nu_c(dz)\right]+\mathbb{E}_P^0\left[\displaystyle\int_0^t\displaystyle\int_0^1e^{uZ_s^c}(e^{uz}-uz-1)ds\nu_c(dz)\right]\\
&= 1+ \left(\displaystyle\int_0^{\infty}(e^{uz}-1)\nu_c(dz)-u\displaystyle\int_0^1z\nu_c(dz)\right)\displaystyle\int_0^t\mathbb{E}_P^0[e^{uZ_s^c}]ds.
\end{align*}
In the first equality, we used the martingale property of the compensated Poisson integrals.
The second equality follows from
\begin{align*}
\mathbb{E}_P^0\left[\int_0^{t+}\int_1^{\infty}e^{uZ_s^c}(e^{uz}-1)N_c(ds,dz)\right]
=\mathbb{E}_P^0\left[\int_0^t\int_1^{\infty}e^{uZ_s^c}(e^{uz}-1)ds\,\nu_c(dz)\right].
\end{align*}
Hence we can obtain that  
\begin{align*}
\mathbb{E}_P^0[e^{u{Z_t^c}}]=\exp\left(\displaystyle{t}\int_0^{\infty}(e^{uz}-1)\nu_c(dz)-\displaystyle{ut}\int_0^1z\nu_c(dz)\right).    
\end{align*}

By (2.8), it follows that
\begin{align*}
\mathbb{E}_P^0[e^{uT_t^c}]
&=
\exp\left(ut\displaystyle\int_0^1z\nu_c(dz)\right)\mathbb{E}_P^0[e^{u{Z_t^c}}]\\
&=\exp\left(ut\displaystyle\int_0^1z\nu_c(dz)\right)\exp\left(t\displaystyle\int_0^{\infty}(e^{uz}-1)\nu_c(dz)-ut\displaystyle\int_0^1z\nu_c(dz)\right)\\
&= \exp\left(t\displaystyle\int_0^{\infty}(e^{uz}-1)\nu_c(dz)\right)\\
&=\exp\Bigl(-t\bigl(\bigl(-2c^{\beta}u+(mc^{\gamma})^{\frac{2}{\alpha}}\bigr)^{\frac{\alpha}{2}}-mc^{\gamma}\bigr)\Bigr)
\end{align*}

from a similar computation to (2.5). Let $u>0$ be arbitrarily fixed. Since ${\frac{(mc^{\gamma})^{\frac{2}{\alpha}}}{2c^\beta}}\to{\infty}$ as $c\to{\infty}$, there exists $C>0$ such that ${\frac{(mc^{\gamma})^{\frac{2}{\alpha}}}{2c^\beta}}>u>0$ for all $c>C$. Then we have
\[
\lim_{c\to{\infty}}\mathbb{E}_P^0[e^{uT_t^c}]=\lim_{c\to{\infty}}\exp\Bigl(-t\bigl(\bigl(-2c^{\beta}u+(mc^{\gamma})^{\frac{2}{\alpha}}\bigr)^{\frac{\alpha}{2}}-mc^{\gamma}\bigr)\Bigr)={\exp\left({t\frac{{\alpha}u}{m^{\frac{2}{\alpha}-1}}}\right)}.
\]
Thus the lemma is proved.
\end{proof}
\end{lemma}

\begin{corollary}
We have
\begin{equation}
\lim_{c \to \infty}\mathbb{E}_P^0\left[{\left|T_t^c-\frac{{\alpha}t}{\;\;\;m^{\frac{2}{\alpha}-1}}\right|^n}\right]=0 \quad \text{for}\;\text{all} \;{n\in{\mathbb{N}}}.
\end{equation}
\begin{proof}
We denote $\frac{{\alpha}t}{\;\;\;m^{\frac{2}{\alpha}-1}}$ by $t_{\alpha}$. By Lemma.2.5, we have
\begin{align*}
\left(\mathbb{E}_P^0\left[\left|T_t^c - t_{\alpha}\right|^n\right]\right)^2
&\le 
2\left(\mathbb{E}_P^0\left[\left|T_t^c - t_{\alpha}\right|^n{\mathbbm{1}_{\{T_t^c\ge{t_{\alpha}}\}}}\right]\right)^2
+ 2\left(\mathbb{E}_P^0\left[\left|T_t^c - t_{\alpha}\right|^n{\mathbbm{1}_{\{T_t^c\le{t_{\alpha}}\}}}\right]\right)^2 \\
&\le 
2\mathbb{E}_P^0\left[(n!)^2\left(e^{(T_t^c - t_{\alpha})} - 1\right)^2\right]
+ 2\mathbb{E}_P^0\left[(n!)^2\left(e^{(t_{\alpha} - T_t^c)} - 1\right)^2\right]\to0 \quad (c\to{\infty}).
\end{align*}
Then the corollary is proved.
\end{proof}
\end{corollary}
\section{Non-relativistic limit of spinless case}
In this section, we consider the non-relativistic limit of generalized relativistic Schrödinger operators. First, we introduce the $\alpha/2$-relativistic Schrödinger operator with a vector potential via quadratic forms, and confirm its self-adjointness under suitable conditions.
\subsection{Generalized relativistic Schrödinger operators}
We define generalized relativistic Schrödinger operators under singular vector potential via the theory of quadratic forms. Let $\partial_\mu : \mathscr{S}'(\mathbb{R}^d) \to \mathscr{S}'(\mathbb{R}^d)$, 
$\mu = 1, \ldots, d$, be the distributional derivative with respect to the 
$\mu$-th coordinate on the tempered distribution space 
$\mathscr{S}'(\mathbb{R}^d)$. 
Let $p = -i\nabla$, where $\nabla = (\partial_1, \ldots, \partial_d)$. For a vector potential $a = (a_1, \ldots, a_d) : \mathbb{R}^d \to \mathbb{R}^d$, 
we formally define the Schrödinger operator with the vector potential $a$ by
\[
\frac{1}{2}(p - a)^2.
\]
We rigorously construct it as a self-adjoint operator on $L^2(\mathbb{R}^d)$ via a quadratic form. We introduce following assumptions on $a$:
\begin{align*}
\text{(A.1)} \quad & a \in (L^2_{\mathrm{loc}}(\mathbb{R}^d))^d, \quad  V\in{C_b(\mathbb{R}^d)};   \\
\text{(A.2)} \quad & a \in (L^2_{\mathrm{loc}}(\mathbb{R}^d))^d, 
\quad \nabla \cdot a \in L^1_{\mathrm{loc}}(\mathbb{R}^d), \quad  V\in{C_b(\mathbb{R}^d)}.
\end{align*}
By the next proposition, we can define Schrödinger operators under (A.1). 
\begin{proposition}
If $q$ is a closed semibounded quadratic form, there exists a unique self-adjoint operator $A$ such that 
\[
Q(q)=D(A), 
\quad q(\psi,\varphi)=(\psi,A{\varphi}),\quad \psi,\varphi \in D(A)
.\]
\end{proposition}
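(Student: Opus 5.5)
The plan is to prove the representation theorem for closed semibounded forms (Kato's first representation theorem) by reducing it to the Riesz representation theorem on the form domain.

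\emph{Reduction to the positive case.} Let $q$ be closed with $q(\psi,\psi)\ge -M\|\psi\|^2$. Replace $q$ by $\tilde q=q+(M+1)(\cdot,\cdot)$, which satisfies $\tilde q(\psi,\psi)\ge\|\psi\|^2$. If the statement holds for $\tilde q$ with operator $\tilde A$, then $A=\tilde A-(M+1)$ works for $q$, and uniqueness transfers in the same way.

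\emph{Construction of $A$.} Closedness of $q$ means exactly that $Q(q)$, equipped with the inner product $\langle\psi,\varphi\rangle_+=\tilde q(\psi,\varphi)$, is a Hilbert space $\mathcal H_+$. Moreover $\mathcal H_+$ embeds continuously into $L^2$, since $\|\psi\|\le\|\psi\|_+$.

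Fix $f\in L^2$. The map $\psi\mapsto(\psi,f)$ is a bounded antilinear functional on $\mathcal H_+$. By the Riesz theorem there is a unique $Bf\in\mathcal H_+$ with $\tilde q(\psi,Bf)=(\psi,f)$ for all $\psi\in Q(q)$. The operator $B:L^2\to L^2$ then has the following properties:
\begin{itemize}
\item $B$ is bounded, with $\|Bf\|^2\le\|Bf\|_+^2=(Bf,f)\le\|Bf\|\|f\|$.
\item $B$ is symmetric, since $(Bg,f)=\tilde q(Bg,Bf)=\overline{(Bf,g)}$.
\item $B$ is positive.
\item $B$ is injective: if $Bf=0$ then $(\psi,f)=0$ for all $\psi$ in the form domain, which we assume dense (as is implicit for semibounded forms defining operators), so $f=0$.
\end{itemize}
Hence $\tilde A:=B^{-1}$ is self-adjoint, being the inverse of an injective bounded self-adjoint operator, and $D(\tilde A)=\operatorname{Ran}B\subset Q(q)$. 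For $\varphi=Bf\in D(\tilde A)$ we obtain $\tilde q(\psi,\varphi)=(\psi,\tilde A\varphi)$ for all $\psi\in Q(q)$.

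\emph{Uniqueness and the domain statement.} Uniqueness follows because any self-adjoint $A'$ with this property satisfies $A'\subset \tilde A$ (from the defining identity together with injectivity of $B$), and a self-adjoint operator has no proper self-adjoint extension. The statement as written claims $Q(q)=D(A)$. Strictly, what one obtains is that $D(A)$ is a core for $q$ and $Q(q)=D(|A|^{1/2})$; this identification follows by noting that $\tilde q(\psi,\psi)=\|\tilde A^{1/2}\psi\|^2$ on $D(\tilde A)$ and then taking closures. I would prove that version, reading the statement's domain equality in this sense.

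\emph{Main obstacle.} The main obstacle is this last step: showing that $D(\tilde A)$ is dense in $\mathcal H_+$, so that the form closure identifies $Q(q)$ with $D(\tilde A^{1/2})$. I would handle it by observing that if $\psi\in\mathcal H_+$ is $\tilde q$-orthogonal to $\operatorname{Ran}B$, then $0=\tilde q(\psi,Bf)=(\psi,f)$ for all $f\in L^2$, so $\psi=0$.
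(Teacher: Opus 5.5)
Your proof is correct, but there is no proof in the paper to compare it with: the paper states this proposition without proof and uses it as a black box. It is the representation theorem for closed forms (essentially Reed--Simon, Theorem VIII.15, which is in the bibliography but never cited). Your route is the standard textbook argument: Riesz representation on the form Hilbert space $(Q(q),\tilde q)$, then the bounded, positive, injective, self-adjoint operator $B$, then $\tilde A=B^{-1}$.

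What it gives beyond the citation is twofold.

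\emph{It reaches the correct conclusion.} You are right that $Q(q)=D(A)$ is false as written. Take $q(\psi,\varphi)=(\nabla\psi,\nabla\varphi)$ on $H^1(\mathbb{R}^d)$. A self-adjoint $A$ with $D(A)=H^1$ representing $q$ would make $\psi\mapsto q(\psi,\varphi)$ $L^2$-bounded for every $\varphi\in H^1$, which forces $H^1\subset H^2$. What actually holds is $Q(q)=D(|A|^{1/2})$, with $D(A)$ a form core. Your argument that $\operatorname{Ran}B$ is dense in $\mathcal{H}_+$ handles this step correctly.

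\emph{It identifies the operator domain.} Your construction directly yields
$D(A)=\operatorname{Ran}B=\{\varphi\in Q(q)\mid \psi\mapsto q(\psi,\varphi)\ \text{is}\ L^2\text{-bounded}\}$.
This is the description of $D(h(a))$ that the paper invokes in Proposition 3.2.

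\emph{Uniqueness needs $\psi$ ranging over all of $Q(q)$.} The inclusion $A'+M+1\subset\tilde A$ comes from
$\tilde q(\psi,\varphi)=(\psi,(A'+M+1)\varphi)=\tilde q(\psi,B(A'+M+1)\varphi)$ for all $\psi\in Q(q)$,
that is, from uniqueness in the Riesz representation. So the identity must hold with $\psi$ ranging over all of $Q(q)$, not only over $D(A')$; you should say this explicitly.

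In your corrected formulation this is automatic. If $Q(q)=D(|A'|^{1/2})$ and $q$ is the form of $A'$, then $q(\psi,\varphi)=(\psi,A'\varphi)$ for all $\psi\in Q(q)$ and $\varphi\in D(A')$.

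Do not weaken $Q(q)=D(A)$ merely to $D(A)\subset Q(q)$ while keeping the identity only on $D(A)\times D(A)$. Under that reading uniqueness genuinely fails. For $q(\psi,\varphi)=(\psi',\varphi')$ on $H^1(0,1)$, both the Dirichlet and the Neumann Laplacian have domains inside $H^1(0,1)$ and satisfy $q(\psi,\varphi)=(\psi,A\varphi)$ on their own domains.
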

\quad Let $D_\mu=p_\mu-a_\mu, \mu=1,...,d$, where $p_\mu=-i\partial_\mu.$ Define the quadratic form $q$ by 
\begin{equation}
q(f,g)=\sum_{\mu=1}^d (D_{\mu}f,D_{\mu}g)
\end{equation}
and its quadratic domain is given by
\begin{equation}
Q(q)=\{f\in L^2(\mathbb{R}^d)\mid D_{\mu}f\in L^2(\mathbb{R}^d),\ \mu=1,\ldots,d\}.
\end{equation}
We can obtain a self-adjoint operator through the following lemma under the assumption (A.1). 
\begin{proposition}
Suppose (A.1). Then the quadratic form $q$ defined by (3.1) and (3.2) is a symmetric closed form. In particular, there exists a unique self-adjoint operator $h(a)$ satisfying
\[
D(h(a))=\{{f\in{L^2(\mathbb{R}^d)}}\;|\;q(f,\cdot)\in ({L^2(\mathbb{R}^d)})^*\}
,\]
\[
(h(a)f,g)=q(f,g)\;\;\; \text{for}\;f\in{D(h(a))},\;g\in{L^2(\mathbb{R}^d)}
.\]
\begin{proof}
See \cite[Lemma 1]{HC1981}.   
\end{proof}
\end{proposition}

\begin{definition}
Let
$\Psi_c^{\alpha}$ be the Bernstein function defined by (2.3).
Assume (A.1). We define
\[
D(H_c^{\alpha})=D(\Psi_c^{\alpha}(h(a))),\quad \quad
H_c^\alpha = \Psi_c^{\alpha}(h(a)) + V,
\]
where $h(a)$ is defined in Proposition 3.2. We call $H_c^\alpha$ the $\alpha/2$-relativistic Schrödinger operator with vector potential $a$.
\end{definition}
Since $V$ is bounded, $H_c^\alpha$ is a self adjoint operator on $D(\Psi_c^{\alpha}(h(a)))$.
\subsection{Feynman-Kac-Itô formula}
Let $(B_t)_{t\ge0}$ be the $d$-dimensional Brownian motion starting at $x\in{\mathbb{R}^d}$ on the Wiener space $(\mathscr{X},\mathcal{B}(\mathscr{X}),\mathcal{W}^x)$, where $\mathscr{X}=C([0,\infty);\mathbb{R}^d)$ and $\mathcal{W}^x$ is the Wiener measure. To derive the Feynman-Kac-Itô formula, we need to check that a stochastic integral of $a$ can be defined under the assumption (A.2).

\begin{lemma}
Suppose (A.2). Then \[\mathcal{W}^x\left(\Biggl|\displaystyle\int_0^ta(B_s)\cdot{dB_s}+\frac{1}{2}\displaystyle\int_0^t{\nabla}{\cdot}a(B_s)ds\Biggr|<\infty\right)=1\]
holds for all $t\ge0$.
\begin{proof}
See \cite[Lemma 3.6]{HIL2009}.   
\end{proof}
\end{lemma}
We write
\[
\int_0^t a(B_s)\cdot dB_s
+ \frac{1}{2}\int_0^t \nabla \cdot a(B_s)\,ds
=
\int_0^t a(B_s)\circ dB_s.
\]

\quad Let $(T_t^c)_{t\ge0}$ be the $\alpha/2$-relativistic subordinator defined in (2.4) on a probability space $(\Omega,\mathcal{F},P)$.
From the above preparation, we can obtain the following Feynman--Kac--Itô formula for $e^{-tH_c^\alpha}$.

\begin{proposition}
Suppose (A.2). Then we have \[(f,e^{-tH_{c}^\alpha}g)=\displaystyle\int_{\mathbb{R}^d}\mathbb{E}_{\mathcal{W},P}^{x,0}\Bigl[\overline{f(B_0)}g(B_{T_t^c})e^{-i{\int_0^{T_t^c}a(B_s)\circ{dB_s}}}e^{-\int_0^{t}
V(B_{T_s^c}){ds}}\Bigr]dx.\]
\begin{proof}
See \cite[Theorem~4.208]{JHV2020}.     
\end{proof}
\end{proposition}
From Proposition 3.5, it follows that $\|e^{-tH_{c}^\alpha}\|$ is uniformly bounded for all $c>0$ and for all $t\ge0$. 
\subsection{Non-relativistic limit}
With the preparation of previous sections, we give a stochastic proof of the non-relativistic limit of 
$e^{-t{H_{c}^\alpha}}$ by using the Feynman–Kac–Itô formula. Here we set 
\[H_{\alpha}=\frac{{\alpha}}{m^{\frac{2}{\alpha}-1}}h(a)+V.\]
\begin{lemma}
Suppose $a\in{(L^2(\mathbb{R}^d))^d}$, ${\nabla}\cdot{a}\in L^1(\mathbb{R}^d)$ and $V\in{C_b(\mathbb{R}^d)}$. Then we have
\begin{align*}
s-\lim_{c\to\infty}e^{-tH_{c}^\alpha}=e^{-tH_{\alpha}}\;\;\;\text{for}\;\text{all}\;t\ge0.
\end{align*}
\allowdisplaybreaks
\begin{proof} Suppose $f,g\in{C^{\infty}_0(\mathbb{R}^d)}$. We have
\begin{align*}
&\bigl|(f,e^{-tH_{c}^\alpha}g)
-(f,e^{-tH_{\alpha}}g)\bigr| \\
&\le
2e^{t\delta}\int_{\mathbb{R}^d}
\mathbb{E}_{\mathcal{W},P}^{x,0}
\Bigl[
|f(B_0)|
|g(B_{T_t^c})-g(B_{t_{\alpha}})|
\Bigr] dx
\\
&\quad+
e^{t\delta}\int_{\mathbb{R}^d}
\mathbb{E}_{\mathcal{W},P}^{x,0}
\Bigl[
|f(B_0)|
|g(B_{t_{\alpha}})|
\bigl|
e^{-i\int_0^{T_t^c}a(B_s)\circ dB_s}
-
e^{-i\int_0^{t_{\alpha}}a(B_s)\circ dB_s}
\bigr|
\Bigr] dx
\\
&\quad\quad+
2\int_{\mathbb{R}^d}
\mathbb{E}_{\mathcal{W},P}^{x,0}
\Bigl[
|f(B_0)|
|g(B_{t_{\alpha}})|
\bigl|
e^{-\int_0^{t}V(B_{T_s^c})\,ds}
-
e^{-\int_0^{t}V(B_{s_{\raisebox{-0.2ex}{\scalebox{0.5}{$\alpha$}}}})\,ds}
\bigr|
\Bigr] dx.
\end{align*}

In this proof, we set $\delta={\lVert V \rVert}_{\infty}$. Then the convergence of the first and the third term follows from \cite[Proposition 4.230]{JHV2020}. We will verify the convergence of the second one:

\begin{align*}
&e^{t\delta}\int_{\mathbb{R}^d}
\mathbb{E}_{\mathcal{W},P}^{x,0}
\Bigl[
|f(B_0)|
|g(B_{t_{\alpha}})|
\bigl|
e^{-i\int_0^{T_t^c}a(B_s)\circ dB_s}
-
e^{-i\int_0^{t_{\alpha}}a(B_s)\circ dB_s}
\bigr|
\Bigr] dx\\
&\le e^{t\delta}\displaystyle\int_{\mathbb{R}^d}|f(x)|
 \mathbb{E}_{\mathcal{W},P}^{x,0}
 \left[ |g(B_{t_{\alpha}})|
 \left|\int_{t_{\alpha}}^{T_t^c} a(B_s)\cdot dB_s \right| \right] dx \\
&\qquad\qquad\qquad\qquad
 +\frac{e^{t\delta}}{2}\displaystyle\int_{\mathbb{R}^d}|f(x)|
 \mathbb{E}_{\mathcal{W},P}^{x,0}
 \left[ |g(B_{t_{\alpha}})|
 \left|\int_{t_{\alpha}}^{T_t^c} \nabla\!\cdot a(B_s)\, ds \right| \right] dx.
\end{align*}

We show the convergence of the first term. The second one follows similarly to the first one. We have
\begin{align*}
&e^{t\delta}\displaystyle\int_{\mathbb{R}^d}|f(x)|
 \mathbb{E}_{\mathcal{W},P}^{x,0}
 \left[ |g(B_{t_{\alpha}})|
 \left|\int_{t_{\alpha}}^{T_t^c} a(B_s)\cdot dB_s \right| \right] dx\\
& = e^{t\delta}\displaystyle\int_{\mathbb{R}^d}|f(x)|
 \mathbb{E}_{\mathcal{W},P}^{x,0}
 \left[ |g(B_{t_{\alpha}})|
 \left|\int_{t_{\alpha}}^{T_t^c} a(B_s)\cdot dB_s \right|\mathbbm{1}_{\{T_t^c\ge{t_{\alpha}}\}}\right] dx \\
&\qquad\qquad
 + e^{t\delta}\displaystyle\int_{\mathbb{R}^d}|f(x)|
 \mathbb{E}_{\mathcal{W},P}^{x,0}
 \left[ |g(B_{t_{\alpha}})|
 \left|\int_{t_{\alpha}}^{T_t^c} a(B_s)\cdot dB_s \right|\mathbbm{1}_{\{T_t^c\le{t_{\alpha}}\}} \right] dx. 
\end{align*}

It suffices to show the convergence of the first term, since the second term can be treated in the same way as the first. By Corollary 2.5, we see that
\begin{align*}
&\left(e^{t\delta}\displaystyle\int_{\mathbb{R}^d}|f(x)|
 \mathbb{E}_{\mathcal{W},P}^{x,0}
 \left[ |g(B_{t_{\alpha}})|
 \left|\int_{t_{\alpha}}^{T_t^c} a(B_s)\cdot dB_s \right|\mathbbm{1}_{\{T_t^c\ge{t_{\alpha}}\}}\right] dx\right)^2 \\
& \le e^{2t\delta}{\norm{f}}_{L^2}^2{\norm{g}}_{L^\infty}^2\displaystyle\int_{\mathbb{R}^d}
 \mathbb{E}_{\mathcal{W},P}^{x,0}
 \left[ \left|\int_{t_{\alpha}}^{T_t^c} a(B_s)\cdot dB_s \right|^2\mathbbm{1}_{\{T_t^c\ge{t_{\alpha}}\}}\right] dx \\
& = e^{2t\delta}{\norm{f}}_{L^2}^2{\norm{g}}_{L^\infty}^2\displaystyle\int_{\mathbb{R}^d}
 \mathbb{E}_{\mathcal{W},P}^{x,0}
 \left[ \int_{t_{\alpha}}^{T_t^c} |a(B_s)|^2ds \mathbbm{1}_{\{T_t^c\ge{t_{\alpha}}\}}\right] dx \\
& \le e^{2t\delta}{\norm{f}}_{L^2}^2{\norm{g}}_{L^\infty}^2{\norm{a}}_{L^2}^2\mathbb{E}_P^{0}[|T_t^c-t_{\alpha}|^2] \to\;\; 0 \qquad (c\to\infty).
\end{align*}
In the second equality, we used the Itô-isometry. Hence $e^{-tH_{c}^\alpha}{\to}e^{-tH_{\alpha}}$ on $C_0^{\infty}(\mathbb{R}^d)$ as $c\to{\infty}$. By the limiting argument by using the uniform boundness of $\|e^{-tH_c^{\alpha}}\|$, the lemma is proved.
\end{proof}
\end{lemma}

To extend this to the vector potential $a$ satisfying (A.2), we introduce the following lemma. 

\begin{lemma}
Let $\varepsilon>0$ be arbitrary. Suppose (A.2). Then there exists $(a_R)_{R>0}\subset{L^2(\mathbb{R}^d)}$ such that 
\begin{equation}
\hspace*{-1em}\mathcal{W}^x\left(\left|\displaystyle\int_0^ta(B_s)\circ{dB_s}-\displaystyle\int_0^ta_R(B_s)\circ{dB_s}\right|\ge\varepsilon\right)\le4\left(\frac{2}{\sqrt{2{\pi}t}}\displaystyle\int_R^{\infty}e^{-\frac{y^2}{2t}}dy\right)^d
\end{equation}
for $x\in\mathbb{R}^d$ and $t\ge0$. In particular, $\displaystyle\int_0^ta_R(B_s)\circ{dB_s}\to\displaystyle\int_0^ta(B_s)\circ{dB_s}$ in probability as $R\to{\infty}$.

\begin{proof}
The proof of the lemma is based on \cite[Proposition 3.7]{HIL2009}. Let $\chi \in C_0^\infty(\mathbb{R}^d)$ be $0 \le \chi \le 1$, $\chi(x)=1$ for $|x|<1$, and $\chi(x)=0$ for $|x|\ge 2$. Let $\chi_R=\chi(\frac{x_1}{R})\cdots\chi(\frac{x_d}{R})$, $R>0$.
Define $a_R$ by $a_R(x)=\chi_R(x)a(x)$, $x\in\mathbb{R}^d$. Furthermore, let
\[\Omega_+(R)=\{\omega\in\mathscr{X}\displaystyle|\;\max\limits_{0\le s\le t,\; 1\le \mu\le d} B_s^\mu\le{R}\}
,\]
\[
\;\;\Omega_-(R)=\{\omega\in\mathscr{X}\displaystyle|\;\min\limits_{0\le s\le t,\; 1\le \mu\le d} B_s^\mu\ge{-R}\}
\]

and
\begin{align*}
I(R) &= \left|\displaystyle\int_0^t a(B_s) \cdot dB_s - \displaystyle\int_0^t a_R(B_s) \cdot dB_s \right|, \\
J(R) &= \left|\frac{1}{2}\displaystyle\int_0^t {\nabla}{\cdot}a(B_s) ds - \frac{1}{2}\displaystyle\int_0^t {\nabla}{\cdot}a(B_s) ds \right|.
\end{align*}

Then we see that
\begin{align*}
&\mathcal{W}^x\left(\left|\displaystyle\int_0^ta(B_s)\circ{dB_s}-\displaystyle\int_0^ta_R(B_s)\circ{dB_s}\right|\ge\varepsilon\right)\le{\mathcal{W}^x\biggl(I(R)\ge{\frac{\varepsilon}{2}}\biggr)}+{\mathcal{W}^x\biggl(J(R)\ge{\frac{\varepsilon}{2}}\biggr)}.
\end{align*}

To prove (3.3), we only show that the first term satisfies 
\[\mathcal{W}^x\biggl(I(R)\ge{\frac{\varepsilon}{2}}\biggr)\le2\left(\frac{2}{\sqrt{2{\pi}t}}\displaystyle\int_R^{\infty}e^{-\frac{y^2}{2t}}dy\right)^d. 
\]
The second term is dominated by the same bound and can be estimated in the same way as the first.
We can see that $\max\limits_{0\le s\le t,\; 1\le \mu\le d} B_s^\mu$ and $\min\limits_{0\le s\le t,\; 1\le \mu\le d} B_s^\mu$ have the same distribution. Hence we have 
\[
\mathcal{W}^x(\Omega_-(R))=\mathcal{W}^x(\Omega_+(R))= \prod_{\mu=1}^d\mathcal{W}^x(|B_t^\mu|\le{R})=\left(\frac{2}{\sqrt{2{\pi}t}}\displaystyle\int_R^{\infty}e^{-\frac{y^2}{2t}}dy\right)^d.
\]

It follows that $I(R)=0$ on $\Omega_+(R)\cap\Omega_-(R)$ for all $0\le{s}\le{t}$. Thus we obtain that
\begin{align*}
\mathcal{W}^x\biggl(I(R)\ge{\frac{\varepsilon}{2}}\biggr)&=\mathcal{W}^x\biggl(I(R)\ge{\frac{\varepsilon}{2}},\Omega_+(R)^c\cup\;\Omega_-(R)^c\biggr)\\
&\le2\left(\frac{2}{\sqrt{2{\pi}t}}\displaystyle\int_R^{\infty}e^{-\frac{y^2}{2t}}dy\right)^d .
\end{align*}
Thus the lemma is proved.
\end{proof}
\end{lemma}

\begin{theorem}
Suppose (A.2). Then we have 

\[s-\lim_{c\to\infty}e^{-tH_{c}^\alpha}=e^{-tH_{\alpha}}\;\;\;\text{for}\;\text{all}\;t\ge0.\]

\begin{proof}
Following the limiting argument in the proof of Lemma 3.6, it suffices to show that for $f,g\in{C^{\infty}_0(\mathbb{R}^d)}$, 
\[
\displaystyle\int_{\mathbb{R}^d}\mathbb{E}_{\mathcal{W},P}^{x,0}\Bigl[|f(B_0)|g(B_{t_{\alpha}})||e^{-i{\int_0^{T_t^c}a(B_s)\circ{dB_s}}}-e^{-i{\int_0^{t_{\alpha}}a(B_s)\circ{dB_s}}}|\Bigr]dx \to 0 
\]
as $c\to{\infty}$. We denote $\int_0^{t} a(B_s)\circ dB_s$ and $\int_0^{t} a_R(B_s)\circ dB_s$ by $A_t$ and $A_t^R$, respectively. We see that
\begin{equation}
\hspace{-1em}
\begin{aligned}
&\int_{\mathbb{R}^d}\mathbb{E}_{\mathcal{W},P}^{x,0}\!\Bigl[
|f(B_0)||g(B_{t_{\alpha}})|
\bigl|e^{-i\int_0^{T_t^c} a(B_s)\circ dB_s}
      -e^{-i\int_0^{t_{\alpha}} a(B_s)\circ dB_s}\bigr|
\Bigr]dx \\[0.5em]
&\qquad\le
\int_{\mathbb{R}^d}\mathbb{E}_{\mathcal{W},P}^{x,0}\!\Bigl[
|f(B_0)||g(B_{t_{\alpha}})|
\bigl|e^{-iA_{T_t^c}}-e^{-iA_{T_t^c}^R}\bigr|
\Bigr]dx \\[0.5em]
&\qquad\qquad+
\int_{\mathbb{R}^d}\mathbb{E}_{\mathcal{W},P}^{x,0}\!\Bigl[
|f(B_0)||g(B_{t_{\alpha}})|
\bigl|e^{-i{A_{T_t^c}^R}}-e^{-iA_{t_{\alpha}}^R}\bigr|
\Bigr]dx \\[0.5em]
&\qquad\qquad\qquad+
\int_{\mathbb{R}^d}\mathbb{E}_{\mathcal{W},P}^{x,0}\!\Bigl[
|f(B_0)||g(B_{t_{\alpha}})|
\bigl|e^{-iA_{t_{\alpha}}^R}-e^{-iA_{t_{\alpha}}}\bigr|
\Bigr]dx.
\end{aligned}
\refstepcounter{equation}
\tag*{\hspace{2em}\raisebox{-4.0\baselineskip}{(\theequation)}}
\end{equation}
A positive constant $R$ will be chosen later. 
First, we evaluate the first term of the right-hand side of (3.4). Let $0<\varepsilon<{t_{\alpha}}$. 
\allowdisplaybreaks
Then we have
\begin{align}
\hspace{-1em}
&\left(\int_{\mathbb{R}^d}\mathbb{E}_{\mathcal{W},P}^{x,0}\!\Bigl[
|f(B_0)||g(B_{t_{\alpha}})|
\bigl|e^{-iA_{T_t^c}}-e^{-iA_{T_t^c}^R}\bigr|
\Bigr]dx\right)^2 \nonumber\\
&\le {\norm{f}}^2_{L^2}
\int_{\mathbb{R}^d}\left(\mathbb{E}_{\mathcal{W},P}^{x,0}\!\Bigl[
|g(B_{t_{\alpha}})|
\bigl|e^{-iA_{T_t^c}}-e^{-iA_{T_t^c}^R}\bigr|
\Bigr]\right)^2dx \nonumber\\
&\le {\norm{f}}^2_{L^2}
\int_{\mathbb{R}^d}\mathbb{E}_{\mathcal{W}}^{x}\!\Bigl[
|g(B_{t_{\alpha}})|^2\Bigr]
\mathbb{E}_{\mathcal{W},P}^{x,0}\!\Bigl[
\bigl|e^{-iA_{T_t^c}}-e^{-iA_{T_t^c}^R}\bigr|^2
\Bigr]dx \nonumber\\
&= {\norm{f}}^2_{L^2}\mathbb{E}_P^{0}\Bigl[
\mathbbm{1}_{\{|T_t^c-t_{\alpha}|\ge\varepsilon\}}
\int_{\mathbb{R}^d}\mathbb{E}_{\mathcal{W}}^{x}\!\Bigl[
|g(B_{t_{\alpha}})|^2\Bigr]
\mathbb{E}_{\mathcal{W}}^{x}\!\Bigl[
\bigl|e^{-iA_{T_t^c}}-e^{-iA_{T_t^c}^R}\bigr|^2
\Bigr]dx\Bigr] \nonumber\\
&\,\,\,\,\,\,\,
+{\norm{f}}^2_{L^2}\mathbb{E}_P^{0}\Bigl[
\mathbbm{1}_{\{|T_t^c-t_{\alpha}|\le\varepsilon\}}
\int_{\mathbb{R}^d}\mathbb{E}_{\mathcal{W}}^{x}\!\Bigl[
|g(B_{t_{\alpha}})|^2\Bigr]
\mathbb{E}_{\mathcal{W}}^{x}\!\Bigl[
\bigl|e^{-iA_{T_t^c}}-e^{-iA_{T_t^c}^R}\bigr|^2
\Bigr]dx\Bigr].
\end{align}
We can directly verify that the first term of the most right-hand side of (3.5) is bounded as follows:
\begin{align}
&{\norm{f}}^2_{L^2}\mathbb{E}_P^{0}\Bigl[
\mathbbm{1}_{\{|T_t^c-t_{\alpha}|\ge\varepsilon\}}
\int_{\mathbb{R}^d}\mathbb{E}_{\mathcal{W}}^{x}\!\Bigl[
|g(B_{t_{\alpha}})|^2\Bigr]
\mathbb{E}_{\mathcal{W}}^{x}\!\Bigl[
\bigl|e^{-iA_{T_t^c}}-e^{-iA_{T_t^c}^R}\bigr|^2
\Bigr]dx
\Bigr] \nonumber
\le
4{\norm{f}}^2_{L^2}{\norm{g}}^2_{L^2}
\mathbb{E}_P^{0}\Bigl[
\mathbbm{1}_{\{|T_t^c-t_{\alpha}|\ge\varepsilon\}}
\Bigr].
\end{align}

Next, we evaluate the second term of the most right-hand side of (3.5). Let $\eta > 0$ be arbitrary. We set 
\begin{align*}
\chi_{R,c}^1=\mathbbm{1}_{\{|A_{T_t^c}-A_{T_t^c}^R|\le \eta\}},
\quad \chi_{R,c}^2=\mathbbm{1}_{\{|A_{T_t^c}-A_{T_t^c}^R|\ge \eta\}}.
\end{align*}
Then we have
\begin{align*}
& \mathbb{E}_P^{0}\Biggl[
\mathbbm{1}_{\{|T_t^c-t_{\alpha}|\le \varepsilon\}}
\int_{\mathbb{R}^d}
\mathbb{E}_{\mathcal{W}}^{x}\!\left[
|g(B_{t_{\alpha}})|^2
\right]
\mathbb{E}_{\mathcal{W}}^{x}\!\left[
\bigl|e^{-iA_{T_t^c}}-e^{-iA_{T_t^c}^R}\bigr|^2
\right]
dx
\Biggr]
\\
&=
\mathbb{E}_P^{0}\Biggl[
\mathbbm{1}_{\{|T_t^c-t_{\alpha}|\le \varepsilon\}}
\int_{\mathbb{R}^d}
\mathbb{E}_{\mathcal{W}}^{x}\!\left[
|g(B_{t_{\alpha}})|^2
\right]
\mathbb{E}_{\mathcal{W}}^{x}\!\left[
\bigl|e^{-iA_{T_t^c}}-e^{-iA_{T_t^c}^R}\bigr|^2
\chi_{R,c}^1
\right]
dx
\Biggr]
\\
& \quad +
\mathbb{E}_P^{0}\Biggl[
\mathbbm{1}_{\{|T_t^c-t_{\alpha}|\le \varepsilon\}}
\int_{\mathbb{R}^d}
\mathbb{E}_{\mathcal{W}}^{x}\!\left[
|g(B_{t_{\alpha}})|^2
\right]
\mathbb{E}_{\mathcal{W}}^{x}\!\left[
\bigl|e^{-iA_{T_t^c}}-e^{-iA_{T_t^c}^R}\bigr|^2
\chi_{R,c}^2
\right]
dx
\Biggr].
\end{align*}
It follows that
\begin{align*}
&
\mathbb{E}_P^{0}\Bigl[
\mathbbm{1}_{\{|T_t^c-t_{\alpha}|\le{\varepsilon}\}}
\int_{\mathbb{R}^d}
\mathbb{E}_{\mathcal{W}}^{x}\!\Bigl[
|g(B_{t_{\alpha}})|^2
\Bigr]
\mathbb{E}_{\mathcal{W}}^{x}\!\Bigl[
\bigl|e^{-iA_{T_t^c}}-e^{-iA_{T_t^c}^R}\bigr|^2
\chi_{R,c}^1
\Bigr]
dx
\Bigr]
\\
&\le
\mathbb{E}_P^{0}\Bigl[
\mathbbm{1}_{\{|T_t^c-t_{\alpha}|\le{\varepsilon}\}}
\int_{\mathbb{R}^d}
\mathbb{E}_{\mathcal{W}}^{x}\!\Bigl[
|g(B_{t_{\alpha}})|^2
\Bigr]
\mathbb{E}_{\mathcal{W}}^x[{\eta}^2]
dx
\Bigr]
\le
{\norm{g}}^2{\eta}^2.
\end{align*}
Next, we evaluate the term which contains $\chi_{R,c}^2$. First, there exists sufficiently large $R'>0$ such that for all $R>R'$,

\begin{equation*}
\left(\frac{2}{\sqrt{2\pi (t_{\alpha}-\varepsilon)}} \int_R^{\infty} e^{-\frac{y^2}{2(t_{\alpha}-\varepsilon)}} \, dy \right)^d
\le \frac{3\eta^2}{16}.
\end{equation*}
Hence for all $R>R^{\prime}$, we have
\begin{align*}
&
\mathbb{E}_P^{0}\Bigl[
\mathbbm{1}_{\{|T_t^c-t_{\alpha}|\le{\varepsilon}\}}
\int_{\mathbb{R}^d}
\mathbb{E}_{\mathcal{W}}^{x}\!\Bigl[
|g(B_{t_{\alpha}})|^2
\Bigr]
\mathbb{E}_{\mathcal{W}}^{x}\!\Bigl[
\bigl|e^{-iA_{T_t^c}}-e^{-iA_{T_t^c}^R}\bigr|^2
\chi_{R,c}^2
\Bigr]
dx
\Bigr]
\\
&\le
\mathbb{E}_P^{0}\Bigl[
\mathbbm{1}_{\{|T_t^c-t_{\alpha}|\le{\varepsilon}\}}
\int_{\mathbb{R}^d}
\mathbb{E}_{\mathcal{W}}^{x}\!\Bigl[
|g(B_{t_{\alpha}})|^2
\Bigr]
\mathbb{E}_{\mathcal{W}}^{x}\!\Bigl[
4\chi_{R,c}^2
\Bigr]
dx
\Bigr]
\\
&\le
16\mathbb{E}_P^{0}\Bigl[
\mathbbm{1}_{\{|T_t^c-t_{\alpha}|\le{\varepsilon}\}}
\int_{\mathbb{R}^d}
\mathbb{E}_{\mathcal{W}}^{x}\!\Bigl[
|g(B_{t_{\alpha}})|^2
\Bigr]
dx
\left(
\frac{2}{\sqrt{2{\pi}{T_t^c}}}
\displaystyle\int_R^{\infty}
e^{-\frac{y^2}{2{T_t^c}}}dy
\right)^d
\Bigr]
\\
&\le
16\mathbb{E}_P^{0}\Bigl[
\mathbbm{1}_{\{|T_t^c-t_{\alpha}|\le{\varepsilon}\}}
\int_{\mathbb{R}^d}
\mathbb{E}_{\mathcal{W}}^{x}\!\Bigl[
|g(B_{t_{\alpha}})|^2
\Bigr]
dx\Bigr]
\left(
\frac{2}{\sqrt{2{\pi}({t_{\alpha}-\varepsilon})}}
\displaystyle\int_R^{\infty}
e^{-\frac{y^2}{2({t_{\alpha}-\varepsilon})}}dy
\right)^d
\le
3{\norm{g}}^2{\eta}^2 .
\end{align*}

In the second inequality, we used (3.3) of Lemma 3.7. 
Hence for all $R>R^{\prime}$, the second term of the most right-hand side of (3.5) is dominated as
\begin{align}
&{\norm{f}}^2_{L^2}\mathbb{E}_P^{0}\Bigl[
\mathbbm{1}_{\{|T_t^c-t_{\alpha}|\le\varepsilon\}}
\int_{\mathbb{R}^d}\mathbb{E}_{\mathcal{W}}^{x}\!\Bigl[
|g(B_{t_{\alpha}})|^2\Bigr]
\mathbb{E}_{\mathcal{W}}^{x}\!\Bigl[
\bigl|e^{-iA_{T_t^c}}-e^{-iA_{T_t^c}^R}\bigr|^2
\Bigr]dx
\Bigr] \nonumber\le
4{\norm{f}}^2_{L^2}{\norm{g}}^2\eta^2.
\end{align}

Thus the first term of the most right-hand side of (3.4) can be
\begin{align}
&\left(\int_{\mathbb{R}^d}\mathbb{E}_{\mathcal{W},P}^{x,0}\!\Bigl[
|f(B_0)||g(B_{t_{\alpha}})|
\bigl|e^{-iA_{T_t^c}}-e^{-iA_{T_t^c}^R}\bigr|
\Bigr]dx\right)^2 \nonumber\le4{\norm{f}}^2_{L^2}{\norm{g}}^2\eta^2+4{\norm{f}}^2_{L^2}{\norm{g}}^2_{L^2}
\mathbb{E}_P^{0}\Bigl[
\mathbbm{1}_{\{|T_t^c-t_{\alpha}|\ge\varepsilon\}}
\Bigr].
\end{align}

We can also see that there exists $R^{\prime\prime}$ such that the third term of the most right-hand side of (3.4) satisfies for all $R>R^{\prime\prime}$,

\begin{equation}
\int_{\mathbb{R}^d}\mathbb{E}_{\mathcal{W},P}^{x,0}\!\Bigl[
|f(B_0)||g(B_{t_{\alpha}})|
\bigl|e^{-iA_{t_{\alpha}}^R}-e^{-iA_{t_{\alpha}}}\bigr|
\Bigr]dx\le{\norm{f}}_{L^2}{\norm{g}}_{L^2}\eta\;.
\end{equation}
Let $R>\max\{R^{\prime},R^{\prime\prime}\}$. Then we have

\begin{equation}
\hspace{-1em}
\begin{aligned}
&\int_{\mathbb{R}^d}\mathbb{E}_{\mathcal{W},P}^{x,0}\!\Bigl[
|f(B_0)||g(B_{t_{\alpha}})|
\bigl|e^{-i\int_0^{T_t^c} a(B_s)\circ dB_s}
      -e^{-i\int_0^{t_{\alpha}} a(B_s)\circ dB_s}\bigr|
\Bigr]dx \\[0.5em]
&\qquad\le 2\norm{f}_{L^2}\norm{g}_{L^2}\left({\eta}^2+\mathbb{E}_P^{0}\Bigl[
\mathbbm{1}_{\{|T_t^c-t_{\alpha}|\ge\varepsilon\}}
\Bigr]\right)^{\frac{1}{2}} \\[0.5em]
&\qquad\qquad+
\int_{\mathbb{R}^d}\mathbb{E}_{\mathcal{W},P}^{x,0}\!\Bigl[
|f(B_0)||g(B_{t_{\alpha}})|
\bigl|e^{-i{A_{T_t^c}^R}}-e^{-iA_{t_{\alpha}}^R}\bigr|
\Bigr]dx+{\norm{f}}_{L^2}{\norm{g}}_{L^2}\eta. \\[0.5em]
\end{aligned}
\refstepcounter{equation}
\tag*{\hspace{2em}\raisebox{-0.5\baselineskip}{(\theequation)}}
\end{equation}

By Lemma 3.6, the second term of the right-hand side of (3.7) converges to $0$ as $c\to{\infty}$. Hence we can obtain that
\begin{align*}
&\lim_{c\to{\infty}}\int_{\mathbb{R}^d}
\mathbb{E}_{\mathcal{W},P}^{x,0}\!\Bigl[
|f(B_0)|\,|g(B_{t_{\alpha}})|
\bigl|
e^{-i\int_0^{T_t^c} a(B_s)\circ dB_s}
-
e^{-i\int_0^{t_{\alpha}} a(B_s)\circ dB_s}
\bigr|
\Bigr]dx\le
3\,{\norm{f}}_{L^2}\,{\norm{g}}_{L^2}\,\eta\;.
\end{align*}

Thus the theorem is proved.
\end{proof}
\end{theorem}

\section{Non-relativistic limit of generalized Pauli operators}

In this section, we establish the non-relativistic limit of generalized Pauli operators. First, we verify the self-adjointness of generalized Pauli operators. We then derive a path integral representation and investigate the non-relativistic limit.

\subsection{Pauli operator}

We define Pauli operators as operators acting on $L^2(\mathbb{R}^3;\mathbb{C}^2)$ under a strong restriction:
\begin{align*}
\text{(A.3)}\quad a \in C^2_b(\mathbb{R}^3), \quad V\in{C_b(\mathbb{R}^3)}. 
\end{align*}

The $2\times2$ Pauli matrices are defined by

\[
\sigma_1=
\begin{pmatrix}
0 & 1 \\
1 & 0
\end{pmatrix},\quad
\sigma_2=
\begin{pmatrix}
0 & -i \\
i & 0
\end{pmatrix},\quad
\sigma_3=
\begin{pmatrix}
1 & 0 \\
0 & -1
\end{pmatrix}.
\]

We put $\sigma=(\sigma_1,\sigma_2,\sigma_3)$. They satisfy the anti-commutation relations:

\begin{equation}
\{\sigma_{\mu},\sigma_{\nu}\}=2\delta_{\mu\nu}
,\quad \mu,\nu=1,2,3.
\end{equation}

This leads to the formula as below:
\begin{equation}
\{\sigma_{\mu},\sigma_{\nu}\}=2i\sum_{\lambda=1}^3\varepsilon^{\mu\nu\lambda}\sigma_{\lambda},
\end{equation}
where $\varepsilon^{\mu\nu\lambda}$ is the Levi-Cività tensor given by

\[
\varepsilon^{\mu\nu\lambda}=
\begin{cases}
1, & \mu\nu\lambda\:\text{is}\:\text{an}\:\text{even}\:\text{permutation}\:\text{of}\:123, \\
-1,  &\mu\nu\lambda\:\text{is}\:\text{an}\:\text{odd}\:\text{permutation}\:\text{of}\:123, \\
0, &\text{otherwise}.
\end{cases}
\]

Suppose that the vector potential $a$ satisfies (A.3).
The Pauli operator 
is defined by
\begin{equation}
H_{\mathrm{S}}(a)
= \frac{1}{2}\bigl(\sigma \cdot (-i\nabla - a)\bigr)^2 + V.
\end{equation}
Note that $H_{\mathrm{S}}(a)$ is a self-adjoint operator on $D(-\Delta)$. We set $h_0(a)=\frac{1}{2}\bigl(-i\nabla - a\bigr)^2$ and ${\nabla}\times{a}=b=(b_1,b_2,b_3)$. By using the relation (4.2), we can obtain the equality:

\begin{equation}
H_{\mathrm{S}}(a)=h_0(a)+V-\frac{1}{2}\sigma\cdot b.
\end{equation}

\quad To construct a path integral representation of $e^{-tH_{\mathrm{S}}(a)}$,
we transform $H_{\mathrm{S}}(a)$ to an operator acting on
$\mathbb{C}$-valued $L^2$-functions.
Consider the isomorphism
\[
U : L^2(\mathbb{R}^3;\mathbb{C}^2)
   \longrightarrow L^2(\mathbb{R}^3 \times \mathbb{Z}_2)
\]
defined by

\[
L^2(\mathbb{R}^3;\mathbb{C}^2)\ni
\begin{pmatrix}
f(x,+1)
\\
f(x,-1)
\end{pmatrix}
\mapsto
U\begin{pmatrix}
f(x,+1)
\\
f(x,-1)
\end{pmatrix}
=f(x,\theta)
\in
L^2(\mathbb{R}^3\times{\mathbb{Z}_2})
.\]

Suppose (A.3) and $b\in(L^{\infty}(\mathbb{R}^3))^3$.
The Pauli operators on $L^2(\mathbb{R}^3\times{\mathbb{Z}_2})$ is defined by

\begin{equation}
H_{\mathbb{Z}_2}(a,b)=UH_{\mathrm{S}}(a)U^{-1}.
\end{equation}

$H_{\mathbb{Z}_2}(a,b)$ can be explicitly given by 

\begin{equation}
\biggr(H_{\mathbb{Z}_2}(a,b)f\biggl)(x,\theta)=\Bigl(h_0(a)+V-\frac{1}{2}b_3(x)\Bigl)f(x,\theta)-\frac{1}{2}(b_1(x)-i\theta{b_2(x)})f(x,-\theta).
\end{equation}

\quad Next, we define $H_{\mathbb{Z}_2}(a,b)$ with singular vector potentials. 
We redefine $H_{\mathbb{Z}_2}(a,b)$ with $h_0(a)$ replaced by $h(a)$ defined in Proposition 3.2. We introduce assumption (A.4):
\begin{align*}
\text{(A.4)}\quad & a \in (L^2_{\mathrm{loc}}(\mathbb{R}^3))^3, 
\quad \nabla \cdot a \in L^1_{\mathrm{loc}}(\mathbb{R}^3), \quad  V\in{C_b(\mathbb{R}^3)}.
\end{align*}

\begin{definition}Suppose (A.4) and $b\in(L^{\infty}(\mathbb{R}^3))^3$. The Pauli operator with a vector potential $a$ is defined by
\begin{equation}
\begin{aligned}
D(H_{\mathbb{Z}_2}(a,b)) &= D(h(a)), \\
\bigl(H_{\mathbb{Z}_2}(a,b)f\bigr)(x,\theta) &= \left(h(a) + V - \frac{1}{2}b_3(x)\right) f(x,\theta) - \frac{1}{2}(b_1(x) - i\theta b_2(x)) f(x,-\theta),
\end{aligned}
\end{equation}
where $f \in D(H_{\mathbb{Z}_2}(a,b))$.
\end{definition}
Since $b$ and $V$ are bounded multiplication operators, $H_{\mathbb{Z}_2}(a,b)$ is a self-adjoint operator on $D(H_{\mathbb{Z}_2}(a,b))$ and bounded from below. 

\quad Finally, we derive a path integral representation of $e^{-tH_{\mathbb{Z}_2}(a,b)}$. To this end, it is necessary to introduce an additional random process describing the spin component. 

\begin{definition}
Let $({\Omega}',{\mathcal{F}}',Q)$ be a probability space, and $(N_t)_{t\ge0}$ be a Poisson process with  intensity $1$ on $({\Omega}',{\mathcal{F}}',Q)$. The random process $(\theta_t)_{t\ge0}$ is defined by 
\[
\theta_t
=(-1)^{N_t}
,\quad t\ge0.
\]
The random process $(\theta_t)_{t\ge0}$ is called a spin process.
\end{definition}

We define a random process $(q_t)_{t\ge0}$ on $({\mathscr{X}}\times{\Omega}',\mathcal{B}(\mathscr{X})\otimes{{\mathcal{F}}'},{\mathcal{W}^x}\otimes{Q})$
by
\begin{equation}
q_t=(B_t,\theta_t), \quad t\ge0.   
\end{equation}

With these preparations, we can derive a path integral representation of $e^{-tH_{\mathbb{Z}_2}(a,b)}$. 
We set 
\[\Pi_s(y) = (2\pi s)^{-3/2} e^{-\frac{|y|^2}{2s}}.\] 
\begin{proposition}
Suppose that (A.4) and $b\in(L^{\infty}(\mathbb{R}^3))^3$. In addition, we assume that
\begin{equation}
\displaystyle\int_0^t{ds}\displaystyle\int_{\mathbb{R}^3}\left|\log\frac{1}{2}\sqrt{{b_1(y)}^2+{b_2(y)}^2}\right|\Pi_s(y-x)dy<\infty
\end{equation}
for all $(x,t)\in{\mathbb{R}^3\times[0,\infty)}$. Then, for $f,g\in{L^2(\mathbb{R}^3\times{\mathbb{Z}_2})}$, we have

\begin{equation}
(f,e^{-tH_{\mathbb{Z}_2}(a,b)}g)=e^t\sum_{\sigma=1,2}\displaystyle\int_{\mathbb{R}^3}\mathbb{E}_{\mathcal{W},Q}^{x,\sigma}[\overline{f(q_0)}g(q_{t})e^{Z_t}]dx,\    
\end{equation}
where 
\begin{align*}
Z_t
&= - i \int_0^t a(B_s)\circ \mathrm dB_s
   - \int_0^t V(B_s)\,\mathrm ds
   + \int_0^t\frac{1}{2}\theta_s b_3(B_s)\,\mathrm ds \\
&\quad + \int_0^{t+}
   \log\!\Bigl(
     \tfrac12\bigl(b_1(B_s)- i\theta_s b_2(B_s)\bigr)
   \Bigr)\mathrm dN_s .
\end{align*}
\begin{proof}
Formula (4.10) under (A.3) is proved in \cite[Theorem 4.235]{JHV2020}, also, the formula under (A.4) is proved in \cite[Theorem 4.240]{JHV2020}.
\end{proof}
\end{proposition}

We write $U(x,\theta)=\frac{1}{2}{\theta}b_3(x)$ and $W(x,-\theta)=\log\!\Bigl(
     \frac{1}{2}\bigl(b_1(x)- i\theta b_2(x)\bigr)
   \Bigr)$.

\begin{remark}
To prove (4.10), we need to check that $\left|\displaystyle\int_{0}^{t+}W(B_s,-{\theta}_s)dNs\right|$
is bounded almost surely. Suppose that (4.9) is satisfied. Then we can see that
\begin{align*}
&\left|
\mathbb{E}_{\mathcal{W},Q}^{x,\sigma}
\left[
\int_0^{t+} W(B_s,-\theta_s)\,\mathrm dN_s
\right]
\right|
\\
&{\le}\displaystyle\int_0^t{ds}\displaystyle\int_{\mathbb{R}^3}\left|\log\Bigl(\frac{1}{2}\sqrt{{b_1(y)}^2+{b_2(y)}^2}\Bigr)\right|(\sqrt{2{\pi}s})^{-\frac{3}{2}}\displaystyle\int_{\mathbb{R}^3}e^{-\frac{|x-y|^2}{2s}}dy<\infty.
\end{align*}
\end{remark}

\subsection{Generalized relativistic Pauli operators}

In this subsection, we define generalized Pauli operators with singular vector potentials.
We denote the operator $H_{\mathbb{Z}_2}(a,b)$ with $V=0$ by $H^{0}_{\mathbb{Z}_2}(a,b)$. 
Suppose (A.4) and $b \in (L^{\infty}(\mathbb{R}^3))^3$. 
We can define the Hamiltonian with the kinetic term $h(a)$. 
The resulting operator on $L^2(\mathbb{R}^3 \times \mathbb{Z}_2)$ is given as follows.

\begin{definition}
Suppose (A.4) and $b\in(L^{\infty}(\mathbb{R}^3))^3$. Let $\Psi_c^{\alpha}$ be the Bernstein function by (2.3). Then generalized Pauli operators with singular vector potentials are defined by

\begin{equation}
\begin{gathered}
D(H^{c}_{\alpha,{\mathbb{Z}}_2}(a,b)) = D(\Psi_c^{\alpha}(h(a))), \\[8pt]
\begin{aligned}
H^{c}_{\alpha,{\mathbb{Z}}_2}(a,b)=\Psi_c^{\alpha}(H^{0}_{\mathbb{Z}_2}(a,b))+V.
\end{aligned}
\end{gathered}
\end{equation}
\end{definition}

We can see that $H^{c}_{\alpha,{\mathbb{Z}}_2}(a,b)$ is self-adjoint on $D(\Psi_c^{\alpha}(H^{0}_{\mathbb{Z}_2}(a,b))$ and bounded from below.

Let $(T_t^c)_{t\ge0}$ be the $\alpha/2$-relativistic subordinator, and $\theta_t$ be the spin process on $({\Omega}',{\mathcal{F}}',Q)$. We consider the random process $({{q}}_{T_t^c})_{t\ge0}$ on a probability space $({\mathscr{X}}\times{\Omega}\times{\Omega}',\mathcal{B}(\mathscr{X})\otimes{\mathcal{F}}\otimes{{\mathcal{F}}'},{\mathcal{W}^x}\otimes{P}\otimes{Q})$ defined by
\begin{equation}
({{q}}_{T_t^c})_{t\ge0}=(B_{T_t^c},{\theta}_{T_t^c})_{t\ge0}.  
\end{equation}

By this random process, we can derive the path integral representation of $e^{-tH^{c}_{\alpha,{\mathbb{Z}}_2}}$. 

\begin{proposition}
Suppose (A.4) and $b\in(L^{\infty}(\mathbb{R}^3))^3$. In addition, assume that $b$ satisfies (4.9).
Then, for $f,g\in{L^2(\mathbb{R}^3\times{\mathbb{Z}_2})}$, we have

\begin{equation}
(f,e^{-tH^{c}_{\alpha,{\mathbb{Z}}_2}(a,b)}g)=\sum_{\sigma=1,2}\displaystyle\int_{\mathbb{R}^3}\mathbb{E}_{\mathcal{W},P,Q}^{x,0,\sigma}[e^{T_t^c}\overline{f(q_0)}g({q}_{T_t^c})e^{{\tilde{Z_t^c}}}]dx,\    
\end{equation}

where 
\begin{align*}
\tilde{Z_t^c}
&= - i \int_0^{T_t^c} a(B_s)\circ \mathrm dB_s
   - \int_0^t V(B_{T_s^c})\,\mathrm ds
   + \int_0^{T^c_t}\frac{1}{2}\theta_s b_3(B_s)\,\mathrm ds \\
&\quad + \int_0^{T_t^c+}
   \log\!\Bigl(
     \tfrac12\bigl(b_1(B_s)- i\theta_s b_2(B_s)\bigr)
   \Bigr)\mathrm dN_s.
\end{align*}
\begin{proof}
The proof is a minor modification of \cite[Theorem 4.249]{JHV2020}.   
\end{proof}
\end{proposition}

\subsection{Non-relativistic limit}
We establish the non-relativistic limit of generalized Pauli operators in this subsection.
We write
\begin{align*}
\frac{\alpha}{m^{\frac{2}{\alpha}-1}}H_{\mathbb{Z}_2}(a,b)    
=H_{\alpha,\mathbb{Z}_2}(a,b).
\end{align*}
Now we are in the position to state the main theorem in this paper.
\begin{theorem}
Suppose (A.4) and $b\in(L^{\infty}(\mathbb{R}^3))^3$. In addition, assume that $b$ satisfies (4.9).
Then we have 
\begin{equation}
s-\lim_{c\to{\infty}}e^{-tH^{c}_{\alpha,{\mathbb{Z}}_2}(a,b)}=e^{-tH_{\alpha,{\mathbb{Z}}_2}(a,b)}\quad \text{for}\; \text{all}\;t\ge0.
\end{equation}
\end{theorem}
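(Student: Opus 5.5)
The plan is to compare the Feynman--Kac formula of Proposition 4.6 for $e^{-tH^{c}_{\alpha,\mathbb{Z}_2}(a,b)}$ with the one of Proposition 4.3 for the limit. Write $\kappa=\alpha/m^{\frac{2}{\alpha}-1}$ and $t_\alpha=\kappa t$. Since $H_{\alpha,\mathbb{Z}_2}(a,b)=\kappa H_{\mathbb{Z}_2}(a,b)$, we have $e^{-tH_{\alpha,\mathbb{Z}_2}(a,b)}=e^{-t_\alpha H_{\mathbb{Z}_2}(a,b)}$. Formula (4.10) at time $t_\alpha$ then gives
\[
(f,e^{-tH_{\alpha,\mathbb{Z}_2}(a,b)}g)=\sum_{\sigma=1,2}\int_{\mathbb{R}^3}\mathbb{E}^{x,0,\sigma}_{\mathcal{W},P,Q}\bigl[e^{t_\alpha}\overline{f(q_0)}g(q_{t_\alpha})e^{Z_{t_\alpha}}\bigr]dx .
\]
The aim is to show that the integrand of (4.13) converges to this one in $L^1(dx\otimes \mathcal{W}^x\otimes P\otimes Q)$, first for $f,g\in C_0^\infty(\mathbb{R}^3\times\mathbb{Z}_2)$. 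The spin and Brownian paths are independent of $T_t^c$, so the only $c$-dependence enters through the random time $T_t^c$, which by Corollary 2.6 tends to $t_\alpha$ in every $L^n(P)$.

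First I need uniform bounds. We have $|e^{\tilde Z_t^c}|\le e^{t\|V\|_\infty+\frac12\|b_3\|_\infty T_t^c}\prod_{\text{jumps } s\le T_t^c}\tfrac12|b_1-i\theta_s b_2|(B_s)$. Each jump factor is at most $\frac12\|b\|_\infty$, and the number of jumps is $N_{T_t^c}$. Therefore
\[
\mathbb{E}_Q\bigl[e^{T_t^c}|e^{\tilde Z_t^c}|\bigr]\le e^{t\|V\|_\infty}\exp\bigl(C T_t^c\bigr),\qquad C=1+\tfrac12\|b\|_\infty+\tfrac12\|b\|_\infty.
\]
By Lemma 2.5 the supremum over $c$ of $\mathbb{E}_P[e^{CT_t^c}]$ is finite. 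This gives uniform integrability, the uniform bound $\sup_c\|e^{-tH^c_{\alpha,\mathbb{Z}_2}}\|<\infty$ needed for the final density argument, and, via Cauchy--Schwarz, reduces each difference term to showing convergence in probability of one factor at a time.

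Next I will telescope the difference into several terms:
\begin{itemize}
\item the factor $e^{T_t^c}-e^{t_\alpha}$;
\item the endpoint $g(q_{T_t^c})-g(q_{t_\alpha})$, which is controlled by continuity of $B$ together with $Q(N_{T_t^c}\neq N_{t_\alpha})\to0$;
\item the magnetic phase $\int_0^{T^c_t}a\circ dB$, handled exactly as in Theorem 3.8 (Lemma 3.6 after the cutoff $a_R$ of Lemma 3.7);
\item the term $\int_0^{T_t^c}\frac12\theta_sb_3(B_s)ds$, which differs from its value at $t_\alpha$ by at most $\frac12\|b_3\|_\infty|T_t^c-t_\alpha|$;
\item the jump integral $\int_0^{T_t^c+}W\,dN_s$, which differs from its value at $t_\alpha$ only on the event that $N$ jumps between $T_t^c$ and $t_\alpha$, an event of $Q$-probability at most $\mathbb{E}_P|T_t^c-t_\alpha|\to0$;
\item the potential term, where $\int_0^tV(B_{T_s^c})ds$ must be matched with the potential part of $Z_{t_\alpha}$.
\end{itemize}
For the jump integral, integrability of $W$ near zeros of $b_1^2+b_2^2$ is controlled through (4.9) and Remark 4.4.

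The hard step is the potential term, because the limit $H_{\alpha,\mathbb{Z}_2}$ carries the rescaled potential $\kappa V$. I therefore must show that $\int_0^tV(B_{T_s^c})ds$ converges to $\int_0^{t_\alpha}V(B_u)du$, equivalently to $\kappa\int_0^tV(B_{\kappa s})ds$, in probability. The first attempt is the route the paper uses in Lemma 3.6, namely \cite[Proposition 4.230]{JHV2020}. It rests on $T_s^c\to\kappa s$ uniformly on $[0,t]$ in probability, which follows from monotonicity of $s\mapsto T_s^c$ together with Corollary 2.6 on a finite grid, combined with uniform continuity of $B$ on compacts and $V\in C_b$. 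That argument gives $\int_0^tV(B_{T_s^c})ds\to\int_0^tV(B_{\kappa s})ds$. This is exactly the point where the normalisation of the potential must be reconciled with the scaling of $H_{\alpha,\mathbb{Z}_2}(a,b)$, and I expect it to be the main obstacle. Once every factor converges, the $L^1$ convergence for $C_0^\infty$ data follows from the uniform bound, and the uniform operator bound extends the strong convergence to all of $L^2(\mathbb{R}^3\times\mathbb{Z}_2)$.
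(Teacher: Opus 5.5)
The step you single out as the main obstacle cannot be completed, and the reason is not technical. If $H_{\alpha,\mathbb{Z}_2}(a,b)=\kappa H_{\mathbb{Z}_2}(a,b)$ is read literally, so that the limit carries $\kappa V$, the convergence you are trying to prove fails in general when $\kappa\neq1$. Constant potentials already show this. For $V\equiv v$ one has $e^{-tH^{c}_{\alpha,\mathbb{Z}_2}(a,b)}=e^{-tv}e^{-t\Psi_c^\alpha(H^0_{\mathbb{Z}_2}(a,b))}$. Suppose the statement held both for $V\equiv v$ and for $V\equiv0$, with the same admissible $a,b$ (e.g.\ $a=0$, $b\equiv(2,0,0)$, for which (4.9) is trivial). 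Then the limits would give $e^{-tv}e^{-t\kappa H^0_{\mathbb{Z}_2}(a,b)}=e^{-t\kappa v}e^{-t\kappa H^0_{\mathbb{Z}_2}(a,b)}$, i.e.\ $v=\kappa v$. Your path-wise computation says the same thing: $\int_0^tV(B_{T_s^c})ds\to\int_0^tV(B_{\kappa s})ds$, while the potential part of $Z_{t_\alpha}$ is $\int_0^{t_\alpha}V(B_u)du=\kappa\int_0^tV(B_{\kappa s})ds$. So the convergence $\int_0^tV(B_{T_s^c})ds\to\int_0^{t_\alpha}V(B_u)du$ that your plan requires is false, and no estimate will reconcile the normalisation.

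What your argument does prove is the theorem with limit $\kappa H^0_{\mathbb{Z}_2}(a,b)+V$. This is the operator $H^{\mathrm{S},\alpha}$ of the abstract and the analogue of $H_\alpha=\kappa h(a)+V$ in Section 3. To close the proof, compare (4.13) with (4.10) written for $H^0_{\mathbb{Z}_2}(a,b)+\kappa^{-1}V$ at time $t_\alpha$, using $e^{-t(\kappa H^0_{\mathbb{Z}_2}(a,b)+V)}=e^{-t_\alpha(H^0_{\mathbb{Z}_2}(a,b)+\kappa^{-1}V)}$. Its potential functional is $\kappa^{-1}\int_0^{t_\alpha}V(B_u)du=\int_0^tV(B_{\kappa s})ds$, which is exactly your limit; pointwise convergence $T_s^c\to\kappa s$ from Corollary 2.5 plus bounded convergence in $s$ already suffices.

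With that change your plan is essentially the paper's:
\begin{itemize}
\item $S_1^c$ is your $e^{T_t^c}-e^{t_\alpha}$ term.
\item $S_2^c$ is your endpoint term. The paper expands $g(q_{T_t^c})-g(q_{t_\alpha})$ by It\^o's formula into $I_1^c,\dots,I_6^c$ and gets bounds in moments of $|T_t^c-t_\alpha|$; you use continuity and dominated convergence instead.
\item $J_2^c$, $J_3^c$, $J_4^c$ are your $b_3$, magnetic and jump terms.
\item Lemmas 4.8 and 4.9 are your uniform bound and no-jump argument.
\end{itemize}
The paper's $J_1^c$ compares $\int_0^tV(B_{T_s^c})ds$ with $\int_0^tV(B_s)ds$. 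That is neither the true limit nor the potential part of $Z_{t_\alpha}$, so the paper does not resolve the mismatch either.

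Two smaller points. First, the bound $\sup_c\mathbb{E}_P[e^{CT_t^c}]<\infty$ from Lemma 2.4 holds only over $c\ge c_0$, since formula (2.9) requires $C<(mc^\gamma)^{2/\alpha}/(2c^\beta)$; this is enough for your purposes. Second, convergence of $(f,e^{-tH^c_{\alpha,\mathbb{Z}_2}(a,b)}g)$ on a dense set plus a uniform bound yields only weak convergence. You must still upgrade it to strong convergence, via $\|e^{-tH^c_{\alpha,\mathbb{Z}_2}(a,b)}g\|^2=(g,e^{-2tH^c_{\alpha,\mathbb{Z}_2}(a,b)}g)$ and self-adjointness.
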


\begin{proof}
 We will show that $\|e^{-tH^{c}_{\alpha,{\mathbb{Z}}_2}(a,b)}\|$ is uniformly bounded for $c>0$ in Lemma 4.8 below. Then by the limiting argument, it suffices to show that 
\[
\lim_{c\to{\infty}}(f,e^{-tH^{c}_{\alpha,{\mathbb{Z}}_2}(a,b)}g)=(f,e^{-tH_{\alpha,{\mathbb{Z}}_2}(a,b)}g)
\]
for all $f,g\in{C_0^{\infty}(\mathbb{R}^3\times{\mathbb{Z}_2})}$. 
We have
\begin{equation}
|(f,e^{-tH^{c}_{\alpha,{\mathbb{Z}}_2}(a,b)}g)-(f,e^{-tH_{\alpha,{\mathbb{Z}}_2}(a,b)}g)|
\le{S_1^c}+{S_2^c}+{S_3^c}.
\end{equation}
Here we write
\begin{align}
S_1^c &= \sum_{\sigma=1,2}\int_{\mathbb{R}^3}|f(x,\sigma)|\mathbb{E}_{\mathcal{W},P,Q}^{x,0,\sigma}\bigl[|g({q}_{T_t^c})||e^{\tilde{Z_t^c}}| |e^{T_t^c}-e^{t_{\alpha}}| \bigr]dx, \\[10pt]
S_2^c &= \sum_{\sigma=1,2}\int_{\mathbb{R}^3}|f(x,\sigma)|\mathbb{E}_{\mathcal{W},P,Q}^{x,0,\sigma}\bigl[e^{t_{\alpha}}|g({q}_{T_t^c})-g(q_{t_{\alpha}})||e^{\tilde{Z_{t}^c}}|\bigr]dx, \\[10pt]
S_3^c &= \sum_{\sigma=1,2}\int_{\mathbb{R}^3}|f(x,\sigma)|\mathbb{E}_{\mathcal{W},P,Q}^{x,0,\sigma}\bigl[e^{t_\alpha}|g({q}_{t_{\alpha}})||e^{\tilde{Z_t^c}}-e^{Z_{t\alpha}}|\bigr]dx.
\end{align}
\allowdisplaybreaks
The convergence of $S_1^c$ to $0$ as $c\to{\infty}$ can be shown as follows:
\begin{align*}
|S_1^c|^2
&\le \|f\|_{L^2}^2
\sum_{\sigma=1,2}\int_{\mathbb{R}^3}
\mathbb{E}_{\mathcal{W},P,Q}^{x,0,\sigma}
\Bigl[|e^{\tilde{Z_t^c}}|^2
|e^{T_t^c}-e^{t_{\alpha}}|^2\Bigr]
\mathbb{E}_{\mathcal{W},P,Q}^{x,0,\sigma}
\Bigl[|g(q_{T_t^c})|^2\Bigr]
\,dx \\
&\le \|f\|_{L^2}^2
\sum_{\sigma=1,2}\int_{\mathbb{R}^3}
\mathbb{E}_{P}^{0}\Bigl[
e^{2t\|V\|_{\infty}}
e^{2M T_t^c}
|e^{T_t^c}-e^{t_{\alpha}}|^2
\mathbb{E}_{Q}^{\sigma}\!\bigl[(M')^{2N_{T_t^c}}\bigr]
\Bigr] 
\mathbb{E}_{\mathcal{W},P,Q}^{x,0,\sigma}
\Bigl[|g(q_{T_t^c})|^2\Bigr]\,dx \\
&= \|f\|_{L^2}^2
\sum_{\sigma=1,2}\int_{\mathbb{R}^3}
\mathbb{E}_{P}^{0}\Bigl[
e^{2t\|V\|_{\infty}}
e^{2M T_t^c}
|e^{T_t^c}-e^{t_{\alpha}}|^2
 e^{((M')^2-1)T_t^c}
\Bigr] 
\mathbb{E}_{\mathcal{W},P,Q}^{x,0,\sigma}
\Bigl[|g(q_{T_t^c})|^2\Bigr]\,dx \\
&\le\|f\|_{L^2}^2 \|g\|_{L^2}^2
\sup_{c>0}
\Bigl(
\mathbb{E}_{P}^{0}
[e^{4t\|V\|_{\infty}} e^{4M T_t^c}]
\Bigr)^{1/2}
\sup_{c>0}
\Bigl(
\mathbb{E}_{P}^{0}
[e^{2((M')^2-1)T_t^c}]
\Bigr)^{1/2}
\Bigl(
\mathbb{E}_{P}^{0}
[|e^{T_t^c}-e^{t_{\alpha}}|^4]
\Bigr)^{1/2}.
\end{align*}

Here $M=\displaystyle \sup_{x\in\mathbb{R}^3}|b_3(x)|$ and $M'=\displaystyle \sup_{x\in\mathbb{R}^3}\frac{1}{2}\sqrt{b_1(x)^2+b_2(x)^2}$. We also denote the norm on $L^2(\mathbb{R}^3\times{\mathbb{Z}_2})$ by $\|{\cdot}\|_{L^2}$. Hence $S_1^c$ converges to $0$ as $c\to{\infty}$ by Lemma 2.4. We will evaluate $S_2^c$. By the Schwarz inequality, we have
\begin{align*}
|S_2^c|^2
&\le
\left(
\sum_{\sigma=1,2}\int_{\mathbb{R}^3}
|f(x,\sigma)|
\mathbb{E}_{\mathcal{W},P,Q}^{x,0,\sigma}
\bigl[
e^{Z_{T_t^c}}
|g(q_{T_t^c})-g(q_{t_\alpha})|
\bigr]
\,dx
\right)^2 \\
&\le \|f\|_{L^2}^2
\sum_{\sigma=1,2}\int_{\mathbb{R}^3}
\mathbb{E}_{\mathcal{W},P,Q}^{x,0,\sigma}
\bigl[e^{2Z_{T_t^c}}\bigr]
\mathbb{E}_{\mathcal{W},P,Q}^{x,0,\sigma}
\bigl[|g(q_{T_t^c})-g(q_{t_\alpha})|^2\bigr]
\,dx \\
&\le \|f\|_{L^2}^2
\sum_{\sigma=1,2}\int_{\mathbb{R}^3}
\mathbb{E}_{P}^{0}\Bigl[
e^{2t\|V\|_{\infty}}
e^{2M T_t^c}
e^{((M')^2-1)T_t^c}
\Bigr]  
\mathbb{E}_{\mathcal{W},P,Q}^{x,0,\sigma}
\bigl[|g(q_{T_t^c})-g(q_{t_\alpha})|^2\bigr]
\,dx \\
&\le
C\|f\|_{L^2}^2
\sum_{\sigma=1,2}\int_{\mathbb{R}^3}
\mathbb{E}_{\mathcal{W},P,Q}^{x,0,\sigma}
\bigl[
|g(q_{T_t^c})-g(q_{t_\alpha})|^2
\mathbbm{1}_{\{T_t^c \ge t_\alpha\}}
\bigr]
\,dx \\
&\quad\quad\quad +
C\|f\|_{L^2}^2
\sum_{\sigma=1,2}\int_{\mathbb{R}^3}
\mathbb{E}_{\mathcal{W},P,Q}^{x,0,\sigma}
\bigl[
|g(q_{T_t^c})-g(q_{t_\alpha})|^2
\mathbbm{1}_{\{T_t^c \le t_\alpha\}}
\bigr]
\,dx,
\end{align*}

where $C=e^{2t{\|V\|}_{\infty}}\displaystyle\sup_{c>0}\Bigl(\mathbb{E}_{P}^{0}\bigl[{e^{2MT_t^c}}e^{((M')^2-1)T_t^c}]\Bigr)$. Note that $\displaystyle\sup_{c>0}\Bigl(\mathbb{E}_{P}^{0}\bigl[{e^{uT_t^c}}]\Bigr)<{\infty}$ for all $u\in\mathbb{R}$ by Lemma 2.4. It suffices to show that the first term on the right-hand side converges to $0$ as $c \to \infty$. By Itô-formula for semimartingales, we have

\begin{align*}
&\sum_{\sigma=1,2}\int_{\mathbb{R}^3}
\mathbb{E}_{\mathcal{W},P,Q}^{x,0,\sigma}
\bigl[|g(q_{T_t^c})-g(q_{t_\alpha})|^2\mathbbm{1}_{\{T_t^c\ge{t_{\alpha}}\}}\bigr]
\,dx\\
&=\mathbb{E}^{0}_{P}\Biggl[\mathbbm{1}_{\{T_t^c\ge{t_{\alpha}}\}}\sum_{\sigma=1,2}\int_{\mathbb{R}^3}
\mathbb{E}_{\mathcal{W},Q}^{x,\sigma}
\bigl[|g(q_{T_t^c})-g(q_{t_\alpha})|^2\bigr]
\,dx\Biggr]
=\sum_{k=1}^{6}I_k^c,
\end{align*}

where
\begin{align*}
I_1^c &= \mathbb{E}^{0}_{P}\Biggl[\mathbbm{1}_{\{T_t^c\ge{t_{\alpha}}\}}\sum_{\sigma=1,2}\int_{\mathbb{R}^3} \mathbb{E}_{\mathcal{W},Q}^{x,\sigma} \Biggl[\Biggl|\int_{t_{\alpha}}^{T_t^c}{\nabla}g(q_s){\cdot}{dB_s}\Biggr|^2\Biggr] \,dx\Biggr], \\[10pt]
I_2^c &= \mathbb{E}^{0}_{P}\Biggl[\mathbbm{1}_{\{T_t^c\ge{t_{\alpha}}\}}\sum_{\sigma=1,2}\int_{\mathbb{R}^3} \mathbb{E}_{\mathcal{W},Q}^{x,\sigma} \Biggl[\Biggl|\int_{t_{\alpha}}^{T_t^c}\frac{1}{2}{\Delta}g(q_s)ds\Biggr|^2\Biggr] \,dx\Biggr], \\[10pt]
I_3^c &= \mathbb{E}^{0}_{P}\Biggl[\mathbbm{1}_{\{T_t^c\ge{t_{\alpha}}\}}\sum_{\sigma=1,2}\int_{\mathbb{R}^3} \mathbb{E}_{\mathcal{W},Q}^{x,\sigma} \Biggl[\Biggl|\int_{t_{\alpha}}^{T_t^c}h(B_s,\theta_s)dN_s\Biggr|^2\Biggr] \,dx\Biggr], \\[10pt]
I_4^c &= 2\mathbb{E}^{0}_{P}\Biggl[\mathbbm{1}_{\{T_t^c\ge{t_{\alpha}}\}} \sum_{\sigma=1,2}\int_{\mathbb{R}^3} \mathbb{E}_{\mathcal{W},Q}^{x,\sigma} \Biggl[\Biggl(\int_{t_{\alpha}}^{T_t^c}{\nabla}g(q_s){\cdot}{dB_s}\Biggr) \Biggl(\int_{t_{\alpha}}^{T_t^c}{\Delta}g(q_r)dr\Biggr)\Biggr] \,dx\Biggr], \\[10pt]
I_5^c &= 2\mathbb{E}^{0}_{P}\Biggl[\mathbbm{1}_{\{T_t^c\ge{t_{\alpha}}\}} \sum_{\sigma=1,2}\int_{\mathbb{R}^3} \mathbb{E}_{\mathcal{W},Q}^{x,\sigma} \Biggl[\Biggl(\int_{t_{\alpha}}^{T_t^c}{\nabla}g(q_s){\cdot}{dB_s}\Biggr) \Biggl(\int_{t_{\alpha}}^{T_t^c}h(B_r,\theta_r)dN_r\Biggr)\Biggr] \,dx\Biggr], \\[10pt]
I_6^c &= 2\mathbb{E}^{0}_{P}\Biggl[\mathbbm{1}_{\{T_t^c\ge{t_{\alpha}}\}} \sum_{\sigma=1,2}\int_{\mathbb{R}^3} \mathbb{E}_{\mathcal{W},Q}^{x,\sigma} \Biggl[\Biggl(\int_{t_{\alpha}}^{T_t^c}{\Delta}g(q_s)ds\Biggr) \Biggl(\int_{t_{\alpha}}^{T_t^c}h(B_r,\theta_r)dN_r\Biggr)\Biggr] \,dx\Biggr].
\end{align*}
Here we set $h(x,\theta)=g(x,-\theta)-g(x,\theta)$. We evaluate $I_1^c$.
By Itô-isometry, we obtain that
\begin{align*}
I_1^c&=
\mathbb{E}^{0}_{P}\Biggl[\mathbbm{1}_{\{T_t^c\ge{t_{\alpha}}\}}\sum_{\sigma=1,2}\int_{\mathbb{R}^3}
\mathbb{E}_{\mathcal{W},Q}^{x,\sigma}
\Biggl[\int_{t_{\alpha}}^{T_t^c}|{\nabla}g(q_s)|^2ds\Biggr]
\,dx\Biggr]\\
&=\mathbb{E}^{0}_{P}\Biggl[\mathbbm{1}_{\{T_t^c\ge{t_{\alpha}}\}}\int_{t_{\alpha}}^{T_t^c}
\Biggl[\sum_{\sigma=1,2}\int_{\mathbb{R}^3}
\mathbb{E}_{\mathcal{W},Q}^{x,\sigma}|{\nabla}g(q_s)|^2dx\Biggr]
\,ds\Biggr]\\
&\le\|{\nabla}g\|_{L_{2}}^2\mathbb{E}^{0}_{P}\Biggl[\mathbbm{1}_{\{T_t^c\ge{t_{\alpha}}\}}{|T_t^c-t_{\alpha}|}
\Biggr]
\to0\quad\quad(c\to{\infty}).
\end{align*}

The convergence of the right-hand side follows from Corollary 2.5. 
The convergence of $I_2^c$ to $0$ as $c\to{\infty}$ can be proved similarly to $I^c_1$. Next, we evaluate $I_3^c$. By the definition of compensated Poisson integrals and Itô-isometry, we have
\begin{align*}
I_3^c
&=
\mathbb{E}^{0}_{P}\Biggl[\mathbbm{1}_{\{T_t^c\ge{t_{\alpha}}\}}
\sum_{\sigma=1,2}\int_{\mathbb{R}^3}
\mathbb{E}_{\mathcal{W},Q}^{x,\sigma}
\Biggl[\Biggl|\int_{t_{\alpha}}^{T_t^c}h(B_s,\theta_s)d\tilde{N}_s
+\int_{t_{\alpha}}^{T_t^c}h(B_s,\theta_s)ds\Biggr|^2\Biggr]
\,dx\Biggr] \\
&\le
2\mathbb{E}^{0}_{P}\Biggl[\mathbbm{1}_{\{T_t^c\ge{t_{\alpha}}\}}
\sum_{\sigma=1,2}\int_{\mathbb{R}^3}
\mathbb{E}_{\mathcal{W},Q}^{x,\sigma} 
\Biggl[\Biggl|\int_{t_{\alpha}}^{T_t^c}h(B_s,\theta_s)d\tilde{N}_s\Biggr|^2
+\Biggl|\int_{t_{\alpha}}^{T_t^c}h(B_s,\theta_s)ds\Biggr|^2
\Biggr]dx\Biggr] \\
&\le
2\mathbb{E}^{0}_{P}\Biggl[\mathbbm{1}_{\{T_t^c\ge{t_{\alpha}}\}}
\sum_{\sigma=1,2}\int_{\mathbb{R}^3}
\mathbb{E}_{\mathcal{W},Q}^{x,\sigma} 
\Biggl[\int_{t_{\alpha}}^{T_t^c}|h(B_s,\theta_s)|^2ds
+|T_t^c-t_{\alpha}|\int_{t_{\alpha}}^{T_t^c}|h(B_s,\theta_s)|^2ds
\Biggr]dx\Biggr] \\
&=2\|h\|_{L_{2}}^2
\mathbb{E}^{0}_{P}\Biggl[
\mathbbm{1}_{\{T_t^c\ge{t_{\alpha}}\}}
\bigl(|T_t^c-t_{\alpha}|+|T_t^c-t_{\alpha}|^2\bigr)
\Biggr]
\to0
\qquad (c\to{\infty}).
\end{align*}

We show the convergence of $I_4^c$ to $0$ as $c\to{\infty}$. By the Schwarz inequality, we have
\begin{align*}
&|I_4^c|^2 \\
&\le4\mathbb{E}^{0}_{P}\Biggl[\mathbbm{1}_{\{T_t^c\ge{t_{\alpha}}\}}
\Biggl(\sum_{\sigma=1,2}\int_{\mathbb{R}^3}
\Biggl(\mathbb{E}_{\mathcal{W},Q}^{x,\sigma} 
\Biggl[\Biggl|\int_{t_{\alpha}}^{T_t^c}{\nabla}g(q_s){\cdot}{dB_s}\Biggr|^2\Biggr]\Biggr)^{\frac{1}{2}}
\Biggl(\mathbb{E}_{\mathcal{W},Q}^{x,\sigma} 
\Biggl[\Biggl|\int_{t_{\alpha}}^{T_t^c}{\Delta}g(q_r)dr\Biggr|^2\Biggr]\Biggr)^{\frac{1}{2}}
\,dx\Biggr)^2\Biggr]\\
&\le4\mathbb{E}^{0}_{P}\Biggl[\mathbbm{1}_{\{T_t^c\ge{t_{\alpha}}\}}
\Biggl(\sum_{\sigma=1,2}\int_{\mathbb{R}^3}
\mathbb{E}_{\mathcal{W},Q}^{x,\sigma} 
\Biggl[\int_{t_{\alpha}}^{T_t^c}|{\nabla}g(q_s)|^2ds\Biggr]\,dx\Biggr)
\Biggl(\sum_{\sigma=1,2}\int_{\mathbb{R}^3}
\mathbb{E}_{\mathcal{W},Q}^{x,\sigma} \Biggl[\Biggl|\int_{t_{\alpha}}^{T_t^c}{\Delta}g(q_r)dr\Biggr|^2\Biggr]
\,dx\Biggr)\Biggr]\\
&\le4\mathbb{E}^{0}_{P}\bigl[\mathbbm{1}_{\{T_t^c\ge{t_{\alpha}}\}}|T_t^c-t_{\alpha}|^3{\|{\nabla}g\|^2_{L^2}}{\|{\Delta}g\|^2_{L^2}}\bigr] \to0
\qquad (c\to{\infty}).
\end{align*}

The convergence of $I_5^c$ to $0$ as $c\to{\infty}$ follows similarly to that of $I_4^c$. We will show that $I^c_6\to0$ as $c\to{\infty}$. We have 
\begin{align*}
|I^c_6|
&\le
\mathbb{E}^{0}_{P}\Biggl[\mathbbm{1}_{\{T_t^c\ge{t_{\alpha}}\}}
\sum_{\sigma=1,2}\int_{\mathbb{R}^3}
\mathbb{E}_{\mathcal{W},Q}^{x,\sigma} 
\Biggl[\Biggl(\int_{t_{\alpha}}^{T_t^c}|{\Delta}g(q_s)|ds\Biggr)
\Biggl(\int_{t_{\alpha}}^{T_t^c}|h(B_r,\theta_r)|dN_r\Biggr)\Biggr]
\,dx\Biggr] \\
&\le
2{\|{\Delta}g\|}_{\infty}
\mathbb{E}^{0}_{P}\Biggl[\mathbbm{1}_{\{T_t^c\ge{t_{\alpha}}\}}
\sum_{\sigma=1,2}\int_{\mathbb{R}^3}
|T_t^c-t_{\alpha}| 
\mathbb{E}_{\mathcal{W},Q}^{x,\sigma}
\Biggl[\int_{t_{\alpha}}^{T_t^c}|h(B_r,\theta_r)|dr\Biggr]
\Biggr] \\
&=
2{\|{\Delta}g\|}_{\infty}{\|h\|}_{L_{1}}
\mathbb{E}^{0}_{P}[|T_t^c-t_{\alpha}|^2]
\to0
\qquad (c\to{\infty}).
\end{align*}

Thus, we proved that $S_c^2\to0$ as $c\to{\infty}$. Finally we show $S_c^3\to{0}$ as $c\to{\infty}$. We simply write each integral by
\begin{align*}
&A_t=\int_0^{t} a(B_s)\circ \mathrm dB_s,\quad V_t=\int_0^{t}V(B_{s})\, ds,\quad  V_t^c= \int_0^{t}V(B_{T_s^c})\, ds,\\ 
&U_t= \int_0^{t}\frac{1}{2}\theta_s b_3(B_s),\quad
W_t=\int_0^{t+}
   \log\!\Bigl(
     \tfrac12\bigl(b_1(B_s)- i\theta_s b_2(B_s)\bigr)
   \Bigr) dN_s.
\end{align*}

Then we have
\[
|S_c^3|\le{e^{t_{\alpha}}J_1^c+e^{t_{\alpha}+t\|V\|_{\infty}}J_2^c+e^{t\|V\|_{\infty}}
e^{t_{\alpha}(M+1)}J_3^c+e^{t\|V\|_{\infty}}
e^{t_{\alpha}M}J_4^c}
,\]

where
\begin{align*}
J_1^c &= \sum_{\sigma=1,2} \int_{\mathbb{R}^3} |f(x,\sigma)| \mathbb{E}_{\mathcal{W},P,Q}^{x,0,\sigma} \Bigl[ |g(q_{t_{\alpha}})| | e^{-V_t^c} - e^{-V_t}| |e^{U_{t_{\alpha}}}| |e^{W_{t_{\alpha}}}| \Bigr] \,dx, \\[10pt]
J_2^c &= \sum_{\sigma=1,2} \int_{\mathbb{R}^3} |f(x,\sigma)| \mathbb{E}_{\mathcal{W},P,Q}^{x,0,\sigma} \bigl[ |g(q_{t_{\alpha}})| |e^{U_{T_t^c}}-e^{U_{t_{\alpha}}}| |e^{W_{T_t^c}}| \bigr] \, dx, \\[10pt]
J_3^c &= \sum_{\sigma=1,2} \int_{\mathbb{R}^3} |f(x,\sigma)| \mathbb{E}_{\mathcal{W},P,Q}^{x,0,\sigma} \bigl[ |g(q_{t_{\alpha}})| |e^{-iA_{T_t^c}}-e^{-iA_{t_{\alpha}}}| |e^{W_{t_{\alpha}}}| \bigr] \, dx, \\[10pt]
J_4^c &= \sum_{\sigma=1,2} \int_{\mathbb{R}^3} |f(x,\sigma)| \mathbb{E}_{\mathcal{W},P,Q}^{x,0,\sigma} \bigl[ |g(q_{t_{\alpha}})| |e^{W_{T_t^c}}-e^{W_{t_{\alpha}}}| \bigr] \, dx.
\end{align*}

We have
\begin{align*}
|J_1^c|^2\le{\tilde{M}}\sum_{\sigma=1,2}
\int_{\mathbb{R}^3}\mathbb{E}_{\mathcal{W},P,Q}^{x,0,\sigma}
\Biggl[|g(q_{t_{\alpha}})|^2\Biggl|\int_0^tV(B_{T_s^c})-V(B_{t_{\alpha}})ds\Biggr|^2\Biggr]
\, dx,   
\end{align*}

where $\tilde{M}
= \|f\|_{L^2}
e^
{2t\|V\|_{\infty}}
e^{t_{\alpha}(2M+((M')^2-1))}$. 
By the dominated convergence theorem, we obtain that $J_1^c\to{0}$ as $c\to{\infty}$. 
We see that
\begin{align*}
|J_2^c|^2&\le
2
\Biggl|
\sum_{\sigma=1,2}
\int_{\mathbb{R}^3}
|f(x,\sigma)|
\mathbb{E}_{\mathcal{W},P,Q}^{x,0,\sigma}
\Bigl[
|g(q_{t_{\alpha}})|
\,|e^{U_{T_t^c}}-e^{U_{t_{\alpha}}}|
\,|e^{W_{T_t^c}}|
\,\mathbbm{1}_{\{T_t^c\ge t_{\alpha}\}}
\Bigr]
\, dx
\Biggr|^2
\\
&\quad+
2
\Biggl|
\sum_{\sigma=1,2}
\int_{\mathbb{R}^3}
|f(x,\sigma)|
\mathbb{E}_{\mathcal{W},P,Q}^{x,0,\sigma}
\Bigl[
|g(q_{t_{\alpha}})|
\,|e^{U_{T_t^c}}-e^{U_{t_{\alpha}}}|
\,|e^{W_{T_t^c}}|
\,\mathbbm{1}_{\{T_t^c\le t_{\alpha}\}}
\Bigr]
\, dx
\Biggr|^2
\\
&\le
(K+K')
\mathbb{E}_{P}^{0}
\bigl[
|T_t^c-t_{\alpha}|^2
\bigr]\longrightarrow 0
\qquad (c\to{\infty}),
\end{align*}

where 
\[
K=2\|f\|^2_{L^2}\|g\|^2_{L^2}\displaystyle\sup_{c>0}\Bigl(\mathbb{E}_{P}^{0}
\bigl[e^{((M')^2-1)T_t^c}e^{2MT_t^c}\bigr]\Bigr),
\]
\[
K'=2e^{2Mt_{\alpha}}\|f\|^2_{L^2}\|g\|^2_{L^2}\displaystyle\sup_{c>0}\Bigl(\mathbb{E}_{P}^{0}
\bigl[e^{((M')^2-1)T_t^c}\bigr]\Bigr).
\]
We show $J_3^c\to{0}$ as $c\to{\infty}$. We have
\begin{align*}
|J_3^c|^2
&\le
\|f\|_{L^1}^2
\sum_{\sigma=1,2}
\int_{\mathbb{R}^3}
|f(x,\sigma)|
\Biggl(
\mathbb{E}_{\mathcal{W},P,Q}^{x,0,\sigma}
\Bigl[
|g(q_{t_{\alpha}})|
\,|e^{-iA_{T_t^c}}-e^{-iA_{t_{\alpha}}}|
\,|e^{W_{t_{\alpha}}}|
\Bigr]
\Biggr)^2
\, dx
\\
&\le
\tilde{K}
\sum_{\sigma=1,2}
\int_{\mathbb{R}^3}
|f(x,\sigma)|
\mathbb{E}_{\mathcal{W},P,Q}^{x,0,\sigma}
\Bigl[
|g(q_{t_{\alpha}})|
\,|e^{-iA_{T_t^c}}-e^{-iA_{t_{\alpha}}}|
\Bigr]
\, dx \to0\quad\quad(c\to{\infty}).
\end{align*}

Here we set $\tilde{K}=2\|g\|_{\infty}{\|f\|}_{L^1}^2 e^{t_{\alpha}((M')^2-1+M)}e^{t\|V\|_{\infty}}$. The convergence of the right-hand side is proved in the same manner as Theorem 3.8.
Thus we can see that $J_3^c\to{0}$ as $c\to{\infty}$. Finally, we have
\begin{align*}
|J_4^c|^2&=\Biggl(\sum_{\sigma=1,2}
\int_{\mathbb{R}^3}
|f(x,\sigma)|
\mathbb{E}_{\mathcal{W},P,Q}^{x,0,\sigma}
\bigl[
|g(q_{t_{\alpha}})|
|e^{W_{T_t^c}}-e^{W_{t_{\alpha}}}|
\bigr]
\, dx\Biggr)^2 \\ 
&\le\|f\|^2_{L^2}
\sum_{\sigma=1,2}\int_{\mathbb{R}^3}
\mathbb{E}_{\mathcal{W},P,Q}^{x,0,\sigma} \left[ \left| e^{W_{T_t^c}} - e^{W_{t_\alpha}} \right|^2 \right]
\mathbb{E}_{\mathcal{W},P,Q}^{x,0,\sigma}
\bigl[|g(q_{t_{\alpha}})|^2
\bigr]
\, dx.
\end{align*}

For all $(x,\sigma)\in{\mathbb{R}^3\times\mathbb{Z}_2}$, we have
\begin{align*}
    &\mathbb{E}_{\mathcal{W},P,Q}^{x,0,\sigma} \left[ \left| e^{W_{T_t^c}} - e^{W_{t_\alpha}} \right|^2 \right] \mathbb{E}_{\mathcal{W},P,Q}^{x,0,\sigma} \left[ \left| g(q_{t_\alpha}) \right|^2 \right] \\
    &\le 2\left( \sup_{c>0} \left( \mathbb{E}_{P}^{0} \left[ e^{((M')^2-1)T_t^c} \right] \right) + e^{2(M'-1)t_\alpha} \right) \mathbb{E}_{\mathcal{W},P,Q}^{x,0,\sigma} \left[ \left| g(q_{t_\alpha}) \right|^2 \right].
\end{align*}
The right-hand side is integrable on ${\mathbb{R}^3}\times{\mathbb{Z}_2}$. We show that $\mathbb{E}_{\mathcal{W},P,Q}^{x,0,\sigma}
\bigl[|e^{W_{T_t^c}}-e^{W_{t_{\alpha}}}|^2
\bigr]\to0$ as $c\to{\infty}$ for all $(x,\sigma)\in{\mathbb{R}^3\times\mathbb{Z}_2}$ in Lemma 4.9 below.
 Hence we can obtain that $J_4^c\to{0}$ as $c\to{\infty}$ by the dominated convergence theorem.
\end{proof}
It remains to show Lemma 4.8 and Lemma 4.9.
\begin{lemma}
For each $t\ge0$, there exists a constant $C_t>0$ independent of $c>0$ such that $\|e^{-tH^{c}_{\alpha,{\mathbb{Z}}_2}(a,b)}\|<C_t$ for all $c>0$.    
\end{lemma}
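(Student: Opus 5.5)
The plan is to bound $\|e^{-tH^{c}_{\alpha,\mathbb{Z}_2}(a,b)}\|$ directly from the Feynman--Kac formula of Proposition 4.6, via a pointwise estimate of the integrand followed by a Schwarz inequality in the path variables. Fix $f,g\in L^2(\mathbb{R}^3\times\mathbb{Z}_2)$. The aim is
\[
|(f,e^{-tH^{c}_{\alpha,\mathbb{Z}_2}(a,b)}g)|\le C_t\|f\|_{L^2}\|g\|_{L^2}
\]
with $C_t$ independent of $c$.

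First I estimate the integrand pathwise.
\begin{itemize}
\item The vector potential contributes only a unimodular phase, $|e^{-iA_{T_t^c}}|=1$.
\item The potential gives $|e^{-\int_0^tV(B_{T_s^c})ds}|\le e^{t\|V\|_\infty}$.
\item The $b_3$ term gives $|e^{U_{T_t^c}}|\le e^{MT_t^c/2}\le e^{MT_t^c}$, where $M=\sup|b_3|$.
\item For the jump term, $|e^{W_{T_t^c}}|=\prod_{\text{jumps } s\le T_t^c}\frac12|b_1(B_s)-i\theta_sb_2(B_s)|\le (M')^{N_{T_t^c}}$, where $M'$ is as in the proof of Theorem 4.7.
\end{itemize}
Together with the prefactor $e^{T_t^c}$, the integrand is bounded by
\[
|f(q_0)|\,|g(q_{T_t^c})|\,e^{t\|V\|_\infty}e^{(M+1)T_t^c}(M')^{N_{T_t^c}}.
\]

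Next I apply the Schwarz inequality in the joint expectation $\mathbb{E}^{x,0,\sigma}_{\mathcal W,P,Q}$, exactly as in the $S_1^c$ estimate. This separates the integrand into two factors:
\begin{itemize}
\item $\mathbb{E}[|g(q_{T_t^c})|^2]$. Since $B$ is independent of $(T^c,N)$ and translation invariant, and $\theta$ takes only two values, integrating in $x$ and summing over $\sigma$ gives at most $2\|g\|_{L^2}^2$. Indeed, $\int\mathbb{E}^x_{\mathcal W}[|g(B_s,\theta)|^2]dx=\|g(\cdot,\theta)\|^2$ for each fixed $s,\theta$.
\item $\mathbb{E}[e^{2(M+1)T_t^c}(M')^{2N_{T_t^c}}]$. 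Conditioning on $T_t^c$ and using $\mathbb{E}_Q[(M')^{2N_s}]=e^{((M')^2-1)s}$, this becomes $\mathbb{E}_P^0[e^{(2M+2+(M')^2-1)T_t^c}]$.
\end{itemize}
Pulling $\sup_x$ of the second factor out and using Cauchy--Schwarz in $(x,\sigma)$ against $|f|$ then gives
\[
|(f,e^{-tH^c}g)|\le \sqrt2\,e^{t\|V\|_\infty}\Bigl(\sup_{c>0}\mathbb{E}_P^0[e^{uT_t^c}]\Bigr)^{1/2}\|f\|\|g\|,\qquad u=2M+1+(M')^2 .
\]

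The constant is finite by Lemma 2.4, which states that $\sup_{c>0}\mathbb{E}_P^0[e^{uT_t^c}]<\infty$ for every $u$. This is the main point to watch, since the exponential moment exists only for $u<(mc^\gamma)^{2/\alpha}/(2c^\beta)$. I would therefore split the supremum. For large $c$, the explicit formula of Lemma 2.4 is convergent, hence bounded. For $c$ in a compact range $(0,C]$, I would have to check that the moment is finite at all, and only a finite set of $c$ may fail. If it is infinite there, I would instead state the bound for $c\ge C_0$, which is all the limiting argument in Theorem 4.7 needs.

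Finally, note that the jump factor bound requires no logarithmic integrability beyond (4.9), because only the modulus $|e^{W}|$ enters the estimate.
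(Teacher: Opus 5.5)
Your argument is the paper's own proof. The paper also uses the Schwarz inequality in the path variables, the pathwise bounds $|e^{U_{T_t^c}}|\le e^{MT_t^c/2}$ and $|e^{W_{T_t^c}}|\le (M')^{N_{T_t^c}}$, the identity $\mathbb{E}_Q[(M')^{2N_s}]=e^{((M')^2-1)s}$ after conditioning on $T_t^c$, and finally uniform exponential moments of $T_t^c$. Collecting everything into one exponent $u=2M+1+(M')^2$, instead of a second Cauchy--Schwarz in $P$, is cosmetic. Every step before the last one is correct.

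\textbf{The gap.} The last step has a genuine gap, and it is the same gap as in the paper, which quotes $\sup_{c>0}\mathbb{E}_P^0[e^{uT_t^c}]<\infty$ from Lemma 2.4. You were right to distrust that claim, but the diagnosis ``only a finite set of $c$ may fail'' is wrong.
\begin{itemize}
\item Put $\lambda_c=(mc^\gamma)^{2/\alpha}/(2c^\beta)=\tfrac12 m^{2/\alpha}c^{2\gamma/\alpha-\beta}$. The large-$c$ part of Lemma 2.4 uses $\lambda_c\to\infty$, so the exponent is positive and $\lambda_c\to0$ as $c\downarrow0$. For instance, $\alpha=1$, $\beta=\gamma=2$ gives $\lambda_c=m^2c^2/2$.
\item Since $\nu_c$ has density proportional to $e^{-\lambda_c y}y^{-1-\alpha/2}$, we get $\int_1^\infty e^{uy}\,\nu_c(dy)=\infty$ whenever $u>\lambda_c$, and then $\mathbb{E}_P^0[e^{uT_t^c}]=\infty$.
\item Hence your bound is $+\infty$ for every $c$ in the whole interval $\{c>0:\lambda_c<u\}$, and $\sup_{c>0}\mathbb{E}_P^0[e^{uT_t^c}]=\infty$ for every $u>0$. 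Note also that $(0,C]$ is not compact; the failure sits exactly at $c\downarrow0$.
\end{itemize}

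\textbf{What your fallback gives, and how to reach all $c$.} Your fallback is correct: prove the bound only for $c\ge C_0$ with $\lambda_{C_0}>u$, using that the explicit formula of Lemma 2.4 is continuous on $[C_0,\infty)$ with a finite limit. This is all the limiting argument of Theorem 4.7 uses, but it is strictly weaker than the lemma as stated. A Feynman--Kac estimate with absolute values cannot be pushed to small $c$ in general; you need operator-theoretic information.
\begin{itemize}
\item If $b=\nabla\times a$ (e.g.\ under (A.3)), then $H^0_{\mathbb{Z}_2}(a,b)=U\tfrac12(\sigma\cdot(-i\nabla-a))^2U^{-1}\ge0$. Hence $\Psi_c^\alpha(H^0_{\mathbb{Z}_2}(a,b))\ge0$, and the spectral theorem gives $\|e^{-tH^c_{\alpha,\mathbb{Z}_2}(a,b)}\|\le e^{t\|V\|_\infty}$ for every $c>0$, with no path integral.
\item If $b$ is unrelated to $a$, as the hypotheses permit, the statement fails for small $c$. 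Take $a=0$ and $b=(b_1,0,0)$ with $b_1>0$ constant, which satisfies (4.9). Then the spectrum of $H^0_{\mathbb{Z}_2}(a,b)$ is $[-\tfrac12 b_1,\infty)$, while $\Psi_c^\alpha$ is real only on $[-\lambda_c,\infty)$. So $H^c_{\alpha,\mathbb{Z}_2}(a,b)$ is not defined once $\lambda_c<\tfrac12 b_1$; correspondingly, the Feynman--Kac integrand there is nonnegative with infinite expectation.
\item In that general case only the large-$c$ statement makes sense. It also follows directly from $H^c_{\alpha,\mathbb{Z}_2}(a,b)\ge\Psi_c^\alpha\bigl(-\tfrac12\sup_x|b(x)|\bigr)-\|V\|_\infty$, valid for $\lambda_c\ge\tfrac12\sup_x|b(x)|$.
\end{itemize}
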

\begin{proof}
For all $f,g\;{\in}L^2({\mathbb{R}^3}\times{\mathbb{Z}_2})$, we have
\begin{align*}
{|(f,e^{-tH^{c}_{\alpha,{\mathbb{Z}}_2}(a,b)}g)|}^2&\le \|f\|_{L^2}^2
\sum_{\sigma=1,2}\int_{\mathbb{R}^3}
\mathbb{E}_{\mathcal{W},P,Q}^{x,0,\sigma}
\bigl[|e^{2Z_{T_t^c}}|e^{2T_t^c}\bigr]
\mathbb{E}_{\mathcal{W},P,Q}^{x,0,\sigma}
\bigl[|g(q_{T_t^c})|^2\bigr]
\,dx \\
&\le e^{2t{\|V\|}_{\infty}}{\|f\|_{L^2}^2}{\|g\|_{L^2}^2}
\sup_{c>0}\biggl(\mathbb{E}_{P}^{0}
\bigl[e^{4(M+1)T_t^c}\bigr]\biggr)^{\frac{1}{2}}
\sup_{c>0}\biggl(\mathbb{E}_{P}^{0}\bigl[e^{2((M')^2-1)T_t^c}\bigr]\biggr)^{\frac{1}{2}}
\end{align*}
by Proposition 4.6. Setting $C_t= e^{2t{\|V\|_{\infty}}}\displaystyle\sup_{c>0}\biggl(\mathbb{E}_{P}^{0}
\bigl[e^{4(M+1)T_t^c}\bigr]\biggr)^{\frac{1}{2}}
\displaystyle\sup_{c>0}\biggl(\mathbb{E}_{P}^{0}\bigl[e^{2((M')^2-1)T_t^c}\bigr]\biggr)^{\frac{1}{2}}$, we complete the proof.
\end{proof}

\begin{lemma}
For all $(x,\sigma)\in{\mathbb{R}^3\times\mathbb{Z}_2}$, we see that

\begin{equation}
\mathbb{E}_{\mathcal{W},P,Q}^{x,0,\sigma}
\bigl[|e^{W_{T_t^c}}-e^{W_{t_{\alpha}}}|^2
\bigr]\to0\quad (c\to{\infty}).   
\end{equation}
\begin{proof}
We divide the right-hand side of (4.19) into two parts:
\begin{align*}
 &\mathbb{E}_{\mathcal{W},P,Q}^{x,0,\sigma}
\bigl[|e^{W_{T_t^c}}-e^{W_{t_{\alpha}}}|^2
\bigr]=\mathbb{E}_{\mathcal{W},P,Q}^{x,0,\sigma}
\bigl[|e^{W_{T_t^c}}-e^{W_{t_{\alpha}}}|^2
\mathbbm{1}_{|T_t^c-t_{\alpha}|\ge{\varepsilon}}
\bigr]+\mathbb{E}_{\mathcal{W},P,Q}^{x,0,\sigma}
\bigl[|e^{W_{T_t^c}}-e^{W_{t_{\alpha}}}|^2
\mathbbm{1}_{|T_t^c-t_{\alpha}|\le{\varepsilon}}
\bigr].
\end{align*}
We show the convergence of the first term. By the Schwarz inequality, we have
\begin{align*}
&\Biggl(
\mathbb{E}_{\mathcal{W},P,Q}^{x,0,\sigma}
\bigl[
|e^{W_{T_t^c}}-e^{W_{t_{\alpha}}}|^2
\mathbbm{1}_{|T_t^c-t_{\alpha}|\ge{\varepsilon}}
\bigr]
\Biggr)^2
\le{\mathbb{E}}_{\mathcal{W},P,Q}^{x,0,\sigma}
\bigl[
|e^{W_{T_t^c}}-e^{W_{t_{\alpha}}}|^4
\bigr]
{\mathbb{E}}_{P}^{0}
\bigl[
\mathbbm{1}_{|T_t^c-t_{\alpha}|\ge{\varepsilon}}
\bigr]\\
&\le
\displaystyle\sup_{c>0}\Biggl(
\mathbb{E}_{\mathcal{W},P,Q}^{x,0,\sigma}
\bigl[
|e^{W_{T_t^c}}-e^{W_{t_{\alpha}}}|^4
\bigr]
\Biggr)
{\mathbb{E}}_{P}^{0}
\bigl[
\mathbbm{1}_{|T_t^c-t_{\alpha}|\ge{\varepsilon}}
\bigr] \to 0
\end{align*}

as $c\to{\infty}$ by Corollary 2.5. Next, we show the convergence of the second term. By the definition of Poisson integrals, we can see that
\begin{align*}
&\Biggl(\mathbb{E}_{\mathcal{W},P,Q}^{x,0,\sigma}
\bigl[|e^{W_{T_t^c}}-e^{W_{t_{\alpha}}}|^2
\mathbbm{1}_{|T_t^c-t_{\alpha}|\le{\varepsilon}}
\bigr]\Biggr)^2=\Biggl(\mathbb{E}_{\mathcal{W},P,Q}^{x,0,\sigma}
\bigl[|e^{W_{T_t^c}}-e^{W_{t_{\alpha}}}|^2
\mathbbm{1}_{\{|T_t^c-t_{\alpha}|\le{\varepsilon}\}}
\mathbbm{1}_{\{{|N_{T_t^c}-N_{t_{\alpha}}|\ge1}\}}\bigr]\Biggr)^2\\
&\le\displaystyle\sup_{c>0}\Biggl(\mathbb{E}_{\mathcal{W},P,Q}^{x,0,\sigma}\
\bigl[|e^{W_{T_t^c}}-e^{W_{t_{\alpha}}}|^4
\bigr]\Biggr)
\mathbb{E}_{P,Q}^{0,\sigma}
\bigl[ \mathbbm{1}_{\{ \lvert T_t^c - t_{\alpha} \rvert \le \varepsilon \}}
  \mathbbm{1}_{\{ \lvert N_{T_t^c} - N_{t_{\alpha}} \rvert \ge 1 \}}
\bigr].
\end{align*}

Furthermore, we have
\begin{align*}
&\mathbb{E}_{P,Q}^{0,\sigma}
\bigl[ \mathbbm{1}_{\{ \lvert T_t^c - t_{\alpha} \rvert \le \varepsilon \}}
  \mathbbm{1}_{\{ \lvert N_{T_t^c} - N_{t_{\alpha}} \rvert \ge 1 \}}
\bigr]\le\mathbb{E}_{P,Q}^{0,0}
\bigl[ \mathbbm{1}_{\{ \lvert T_t^c - t_{\alpha} \rvert \le \varepsilon \}}
  \mathbbm{1}_{\{ \lvert N_{T_t^c} - N_{t_{\alpha}} \rvert \ge 1 \}}
\bigr]\\
&\le\mathbb{E}_{P,Q}^{0,0}
\bigl[
\lvert N_{T_t^c} - N_{t_{\alpha}} \rvert 
\mathbbm{1}_{\{ \lvert T_t^c - t_{\alpha} \rvert \le \varepsilon \}}
  \mathbbm{1}_{\{ \lvert N_{T_t^c} - N_{t_{\alpha}} \rvert \ge 1 \}}
\bigr]\le\biggl(2\mathbb{E}_{Q}^{0}
\bigl[N_{\varepsilon}^2
\bigr]\biggr)^{\frac{1}{2}}
=\sqrt{2}{\varepsilon}.
\end{align*}

Hence we obtain that

\[\lim_{c\to{\infty}}\Biggl(\mathbb{E}_{\mathcal{W},P,Q}^{x,0,\sigma}
\bigl[|e^{W_{T_t^c}}-e^{W_{t_{\alpha}}}|^2
\bigr]\Biggl)^2\le
\sqrt{2}\displaystyle\sup_{c>0}\Biggl(\mathbb{E}_{\mathcal{W},P,Q}^{x,0,\sigma}
\bigl[|e^{W_{T_t^c}}-e^{W_{t_{\alpha}}}|^4
\bigr]\Biggr){\varepsilon}.
\]

Since $\varepsilon$ is arbitrary, (4.19) follows.

\end{proof}

\end{lemma}


\begin{thebibliography}{99}

\bibitem{GGM1983}
G.~F.~De Angelis, G.~Jona-Lasinio and M.~Sirugue,
Probabilistic solution of Pauli-type
equations, \textit{J. Phys. A}
\textbf{16} (1983) 2433--2444.

\bibitem{GAM1991}
G.~F.~De Angelis, A.~Rinaldi and M.~Serva,
Imaginary-time path integral for a relativistic spin-(1/2) particle in a magnetic field, \textit{Europhys. Lett.}
\textbf{14} (1991) 95--100.

\bibitem{App2009}
D.~Applebaum,
\textit{Lévy Processes and Stochastic Calculus}, 2nd ed,
Cambridge University Press, 2009.

\bibitem{Bog2009}
K.~Bogdan, T.~Byczkowski, M.~Ryznar, R.~Song, and Z.~Vondra\v{c}ek,
\textit{Potential Analysis of Stable Processes and Its Extensions},
Lecture Notes in Mathematics \textbf{1980},
Springer, 2009.

\bibitem{BJ1981}
B.~Gaveau and J.~Vauthier,
{Intégrales oscillantes stochastiques : L'équation de Pauli}, \textit{J. Funct. Anal.}
\textbf{44} (1981) 388--400.

\bibitem{Hir2019}
F.~Hiroshima,
Non-rela\-tivistic limit of the semi-rela\-tivistic Pauli-Fierz model,
\textit{RIMS K\=o\-ky\=u\-roku} \textbf{2187} (2021) 1--10.

\bibitem{HIL2009}
F. Hiroshima, T. Ichinose, and J. L\H{o}rinczi, 
Path integral representation for Schr\"odinger operators with Bernstein functions of the Laplacian, 
\textit{Rev. Math. Phys.} \textbf{24} (2012) 1250013.

\bibitem{Hun1975}
W.~Hunziker,
{On the nonrelativistic limit of the Dirac theory},
\textit{Commun. Math. Phys.} \textbf{40} (1975) 215--222.

\bibitem{Ichi1987}
T.~Ichinose,
The nonrelativistic limit problem for a relativistic spinless particle in an electromagnetic field,
\textit{J. Funct. Anal.} \textbf{73} (1987) 233--257.

\bibitem{IT1986}
T.~Ichinose and H.~Tamura,
Imaginary-time path integral for a relativistic spinless particle in an electromagnetic field,
\textit{Commun. Math. Phys.} \textbf{105} (1986) 239--257.

\bibitem{HC1981}
H.~Leinfelder and C.~G.~Simader,
Schrödinger operators with singular magnetic
vector potentials,
\textit{Math. Z.} \textbf{176} (1981) 1--19.

\bibitem{JHV2020}
J.~L\H{o}rinczi, F.~Hiroshima, and V.~Betz,
\textit{Feynman-Kac type theorems and its applications 1}, 2nd ed,
De Gruyter, 2020.

\bibitem{RS1980}
M.~Reed and B.~Simon, 
\textit{Methods of Modern Mathematical Physics I: Functional Analysis},
revised and enlarged edition,
Academic Press, 1980.

\bibitem{Sim1979}
B.~Simon,
\textit{Functional Integration and Quantum Physics},
Academic Press, New York, 1979.

\bibitem{Tha1992}
B.~Thaller,
\textit{The Dirac Equation},
Springer-Verlag, Berlin Heidelberg, 1992.

\end{thebibliography}
\end{document}